%% file: main.tex
\documentclass[12pt,a4paper]{article}

\usepackage[dvipsnames]{xcolor}

\usepackage{flonat-wp}
\usepackage[harvard]{flonat-bib}

\newcommand{\classes}{\mathcal{C}}
\newcommand{\variants}{\mathcal{V}}
\newcommand{\Strategies}{\mathcal{X}}
\newcommand{\Env}{\mathsf{Env}}
\newcommand{\Frag}{\mathsf{Frag}}
\newcommand{\Mean}{\mathsf{Mean}}
\newcommand{\Harm}{\mathsf{H}}
\newcommand{\Metric}{\mathsf{M}}
\newcommand{\Utility}{\mathsf{U}}
\newcommand{\TrueHarm}{\mathsf{H}^{\star}}
\newcommand{\AudHarm}{\widehat{\mathsf{H}}}
\newcommand{\Dmass}{\Delta}
\newcommand{\classof}[1]{\mathsf{cl}(#1)}

\providecommand{\Description}[1]{}

\begin{document}

\begin{titlepage}

    \title{Gaming the Metric, Not the Harm:\\Certifying Safety Audits against Strategic Platform Manipulation%
        \thanks{This work was supported by the Engineering and Physical Sciences Research Council (EPSRC) Centre for Doctoral Training in Cyber Secure Everywhere [grant number EP/Y035313/1].}}

    \author{Florian A.\ D.\ Burnat\thanks{University of Bath, \texttt{fadb20@bath.ac.uk}.}
        \and Brittany I.\ Davidson\thanks{University of Bath, \texttt{bid23@bath.ac.uk}.}}

    \date{\today}

    \maketitle

    \begin{abstract}
        \noindent Online-safety regulation under the UK Online Safety Act and the EU Digital Services Act increasingly treats scalar metrics as compliance evidence. Once announced, such a metric also becomes an optimization target: a strategic platform can improve its score by routing recommendations through semantically equivalent content variants, without reducing true harm. We ask when such an audit metric can still certify a genuine reduction in harm. The protocol is modeled as a published transformation graph whose connected components form semantic classes, and the metric itself is treated as a security object. Three results follow. First, any metric that scores variants directly is manipulable as soon as two equivalent variants in a harmful class disagree in score. Second, the \emph{semantic-envelope lift}, which assigns each variant the maximum score in its class, is the unique pointwise minimum among conservative classwise-constant repairs. Third, a class-stratified certificate, $H^\star(x) \le (1/\hat\alpha) M_{\mathrm{Env}(m)}(x) + \bar\eta$, holds for every platform strategy, with $\bar\eta$ absorbing annotation and protocol error. We check the claims at three levels: exhaustive enumeration on a finite-state grid of mixed strategies, an SMT encoding in Z3 cross-replayed in cvc5, and a bounded single-player MDP encoded in PRISM-games. The fragile metric fails manipulation invariance and cannot support the same useful predeclared class-coverage certificate; under the envelope-level certificate, it produces large violations at every tested instance, with a large mean gaming gap across random catalogs at a fixed audit budget. The semantic-envelope metric exhibits no such violation in the tested instances.

        \bigskip
        \noindent\textbf{Keywords:} recommender systems, platform safety, auditing, formal methods, Goodhart's law, measurement robustness.
    \end{abstract}

    \setcounter{page}{0}
    \thispagestyle{empty}

\end{titlepage}

\clearpage


\section{Introduction}

Two recent regulatory documents have made recommender systems an audit target rather than a design choice. Ofcom's child-safety rules name personalized recommendations as a main pathway through which children encounter harmful content online \citep{Ofcom2025-rs}. Ofcom's additional-safety-measures consultation goes further: it recommends collecting safety metrics during on-platform testing of proposed or actual recommender systems (\S14.16) and explicitly identifies recommender-system ``gaming'' as a major risk (\S14.8) \citep{Ofcom2025-asm}.\footnote{Ofcom's framing of ``gaming'' is user-side---bad actors manipulating ranking signals (\S14.8)---whereas this paper extends the same vocabulary to platform-side gaming of the audit metric.} In parallel, the European Commission's 2025 protection-of-minors guidelines under the Digital Services Act position online-safety evaluation as a practical compliance problem for platforms that serve minors \citep{European-Commission2025-qs}.

This creates a security problem, not just a policy problem. Once a platform is judged by an exposure metric, the metric becomes a target. A platform can improve the score by changing what is measured, how it is labeled, or which semantically equivalent content variants are exposed, even when the underlying harmful exposure remains constant. Concrete representation choices a recommender already controls include: with which thumbnail or caption a video is presented, with which paraphrase of a harmful post is surfaced when duplicates exist, with which localized rendering of the same content is shown, with which auto-generated summary or excerpt accompanies the link, the ranking among semantically equivalent posts, and the moderation-label or metadata-field values that the auditing pipeline reads. The resulting failure mode is a form of reward hacking or Goodhart pressure \citep{Skalse2022-oz}. Functional test suites, such as HateCheck, demonstrate the plausibility of this in practice: moderation models are often brittle to keyword choice, spelling variation, and contrastive non-hate cases that share surface cues with hateful content \citep{Rottger2021-hc}. Security has already encountered closely related failures of measurement and evaluation in other domains \citep{Aerni2024-sm,Wang2025-gt}.

Existing work on recommender auditing, causal auditing metrics, and harm-aware recommendations provides useful ingredients \citep{Messmer2023-ar,Sharma2024-yp,Chee2024-nl}. However, much of this literature treats the metric as descriptive: the platform is measured, the score is reported, and the metric is not modeled as an attack surface. This study asks a narrower question:

\begin{quote}
    \emph{When does a recommender-system safety metric certify a non-trivial upper bound on harmful exposure, even if the audited platform strategically manipulates the measurement channel inside a published audit protocol?}
\end{quote}

The scope is intentionally narrow. We isolate a single layer of audit attack surface: \emph{within-class representation choice under a frozen, published protocol}. Specifically, in scope are (i) the platform's choice of distribution over admissible variants, (ii) the metric's invariance and certification properties quantified over all such strategies, and (iii) auditor-side protocol sensitivity to the validation threshold $\rho$. Out of scope, explicitly, are (a) protocol renegotiation or regulator-platform bargaining over $T_0$, $A$, $s$, or $\rho$; (b) adversarial generation of new variants beyond the published transformation family; (c) adversarial labeling or relabeling of the ground truth; (d) classifier poisoning, evasion attacks, or user-side manipulation; (e) sampling error and statistical inference from a field deployment; and (f) multi-platform ecosystems. Each of these is a distinct threat model and requires its own analysis; a complete audit framework must address all of them. This paper conditions on a published protocol that defines which representation changes count as admissible and asks what properties the metric should satisfy \emph{given that protocol}, keeping the contribution squarely at the level of measurement hardening.

We focus on two properties: \emph{manipulation invariance}, that a metric should not improve solely because the platform swaps one content variant for another semantically equivalent variant with a lower classifier score; and \emph{certification}, that the metric should upper-bound harmful exposure via a quantitative inequality $\Harm(x)\leq \gamma \Metric(x)+\beta$ for every admissible platform strategy $x$.

Our novelty claim is correspondingly modest. We do not claim a new general theory of manipulation beyond strategic classification or reward-hacking work. Our contribution is to transport the adversarial lens to \emph{published platform-audit metrics}, to isolate the within-class manipulation surface created by audit protocols, and to show that a simple max-over-class repair is the least conservative fix for that threat model.

The repair is the \emph{semantic envelope}. Once the auditor has published the semantic classes, each variant is scored by the maximum score attained by any admissible variant in its class. This construction has three useful properties: it restores invariance to within-class manipulation, it is the unique pointwise minimum among conservative classwise-constant repairs, and it yields direct classwise certificates whenever harmful classes are bounded away from zero. We then extend the certificate to imperfect class constructions and annotations by introducing an explicit disagreement-mass term, so that uncertainty appears as quantified slack rather than as an invisible assumption.

To make the framework concrete, we do two things. First, we provide a worked public-data protocol example using HateCheck to show what a protocol-defined semantic class looks like in practice. Second, we instantiate the framework using fully reproducible synthetic stress tests. The platform chooses a distribution of recommended items to maximize utility under an audit budget. Harmful classes include ``original'' and ``manipulated'' variants with identical true harm but different audit scores. The experiments are illustrative rather than calibrated to a deployed recommender: utility is used only to choose one adversarial best response, whereas the theorems themselves quantify over all strategies. This separation matters throughout the paper: the proofs are utility-agnostic, whereas the LPs realize one utility-specific selection from a feasible set.

\paragraph{Contributions.}
\begin{itemize}[leftmargin=1.2em]
    \item We formalize \emph{metric robustness under platform-side manipulation} for recommender-system audits, making semantic classes explicit as a published protocol rather than a hidden assumption.
    \item We define manipulation invariance and certification as security properties of audit metrics, prove that direct variant scoring is manipulable, and show that the semantic-envelope lift is the unique pointwise minimum among conservative classwise-constant repairs.
    \item We add a protocol-sensitivity result: when the auditor tightens validation or removes disputed equivalence edges, the induced partition refines and the envelope score weakly decreases pointwise, yielding a monotone sensitivity band over protocol choices.
    \item We extend the certification theorem to imperfect annotation and protocol construction via a disagreement-mass term $\Dmass(x)$, and we show that the resulting certificate is utility-agnostic because it holds for all platform strategies.
    \item We provide a reproducible synthetic evaluation, compare the semantic envelope to a non-conservative class-mean repair, and include a worked HateCheck-based protocol example together with finite-state, SMT, and bounded-MDP consistency checks (a single-player MDP encoded in PRISM-games).
\end{itemize}

The framework is intentionally modest. We do not claim that semantic envelopes solve the full platform-audit problem, nor that our synthetic stress tests establish deployability for Ofcom- or DSA-scale audits. The claim is smaller and, in our view, useful: \emph{audit metrics should be treated as security objects}, and a simple max-over-class repair already removes one well-defined attack surface.

\section{Threat Model and Setup}

We consider an auditor who evaluates a recommender system for a protected population, such as minors. The recommender exposes content variants from a finite set, $\variants$. We assume $\variants$ is finite throughout; this ensures that the semantic classes $\classes$ defined below form a finite partition of $\variants$ and maintains the audit LPs and SMT encodings tractable.

\begin{table}[t]
    \centering
    \caption{Notation reference. Hatted symbols denote audit-side estimates; starred symbols denote latent ground truth; bare symbols apply under the ideal-case classwise harm labeling.}
    \label{tab:notation}
    \footnotesize
    \begin{tabular}{@{}ll@{}}
        \toprule
        Symbol & Meaning \\
        \midrule
        $\variants$ ($V$), $\classes$ ($\mathcal{C}$) & finite variant set; partition into semantic classes \\
        $\classof{v}$ & semantic class containing variant $v$ \\
        $h(c)\in\{0,1\}$ & ideal-case classwise harm label \\
        $h^{\star}(v)\in\{0,1\}$ & latent variant-level harm \\
        $\hat h(c)\in\{0,1\}$ & audit's classwise harm estimate \\
        $m(v)\in[0,1]$ & per-variant detector score \\
        $\Env(m)(v)$ & semantic-envelope lift, $\max_{u:\classof{u}=\classof{v}} m(u)$ \\
        $x\in\Strategies$ ($\mathcal{X}$) & platform strategy, with $\Strategies\subseteq\Delta(\variants)$ \\
        $\Harm(x)$ & ideal-case harmful exposure (under $h$) \\
        $\TrueHarm(x)$ & latent harmful exposure (under $h^{\star}$) \\
        $\AudHarm(x)$ & audit-estimated harmful exposure (under $\hat h$) \\
        $\Metric_m(x)$ & audit metric, $\sum_v x_v m(v)$ \\
        $\Dmass(x)$ & disagreement mass, $\sum_v x_v |h^{\star}(v)-\hat h(\classof{v})|$ \\
        $\hat\alpha$ & coverage, $\min_{c:\hat h(c)=1}\max_{u:\classof{u}=c} m(u)$ \\
        $\bar\eta$ & published worst-case bound on $\Dmass(x)$ \\
        $\tau$ & audit budget on the announced metric \\
        \bottomrule
    \end{tabular}
\end{table}

\begin{definition}[Published protocol and semantic classes]
A published audit protocol specifies four objects:
\begin{enumerate}[label=(\roman*),leftmargin=1.2em]
    \item a finite candidate transformation family $T_0\subseteq \variants\times\variants$ describing which representation changes the audit is willing to consider;
    \item an attribute-preservation checklist $A(v,u)\in\{0,1\}$ stating whether $v$ and $u$ agree on the audit-relevant fields that must be preserved (for example target identity, direction, or speech-act type);
    \item a validation rule or study that assigns each candidate pair a confidence score $s(v,u)\in[0,1]$ that the pair preserves the audited semantics; and
    \item an acceptance threshold $\rho\in[0,1]$.
\end{enumerate}
The published admissible edge set is
\[
    E_{\rho}=\{(v,u)\in T_0 : A(v,u)=1 \ \wedge\ s(v,u)\ge \rho\}.
\]
The semantic classes $\classes$ are the equivalence classes of the reflexive-symmetric-transitive closure of $E_{\rho}$ (the symmetric step ensures that admissibility does not depend on the auditor's choice of which variant in a candidate pair is the ``original''). Transitive closure is mathematically convenient but semantically risky: pairwise ``same semantics'' judgments are noisy, so a chain of individually plausible edges can connect endpoints whose audited harm no longer matches. We treat any class formed via long transitive chains as a candidate for the disagreement-slack regime of \Cref{thm:disagreement} rather than for the ideal-case certificate, and recommend that auditors publish the maximum path length and the weakest-edge confidence per class so that the cost of transitive merging is visible.
\end{definition}

Write $\classof{v}\in\classes$ for the semantic class containing $v$. Human disagreement is not hidden in a black-box predicate; it appears explicitly through $s(v,u)$ and the threshold $\rho$. Varying $\rho$ yields a family of protocol-induced partitions, which we will later use to make the sensitivity to the protocol visible rather than implicit.

For the core theorems, we first analyze the ideal case in which the published classes are harm-pure and correctly labeled.

\begin{definition}[Ideal class purity]
    The published protocol is \emph{harm-pure} if there exists a class label map $h:\classes\to\{0,1\}$ such that every variant in the same class has the same latent harm, i.e.,
    \[
        h^{\star}(v)=h(\classof{v}) \qquad \text{for all } v\in\variants.
    \]
\end{definition}

In the ideal case, we write $h(v)=h(\classof{v})$. Section~\ref{sec:uncertainty} relaxes this by separating audited class labels from latent variant-level harm and charging the mismatch to an explicit disagreement term.

\paragraph{Protocol-construction workflow.}
A practical audit can instantiate the protocol in four steps.
\begin{enumerate}[leftmargin=1.3em]
    \item Choose a bounded transformation family $T_0$ (for example spelling variants, caption edits, paraphrases, or interface-rendering changes) and the set of audit-relevant attributes that must remain fixed.
    \item Generate candidate pairs $(v,u)\in T_0$.
    \item Validate the pairs with a human study or another published rule, producing confidence scores $s(v,u)$ for ``same audited semantics'' and discarding disputed edges below the chosen threshold $\rho$.
    \item Publish the resulting graph, induced classes, and at least one sensitivity comparison under a stricter threshold or a pruned edge set.
\end{enumerate}
The paper's formal guarantees are conditional on this published protocol. The meta-game in which a platform tries to influence or negotiate the protocol is outside our core model; instead, we make the protocol explicit and require its sensitivity to be reported.

\paragraph{Worked public-data example.}
HateCheck is a public functional test suite for hate-speech detection, with 29 functionalities grouped into 11 classes and 3{,}728 validated cases \citep{Rottger2021-hc}. It is useful here not because it solves the audit problem, but because it makes protocol construction concrete. An auditor concerned with hateful exposure can fix a protected target, a directed/general flag, and a harm-relevant speech act, and then group paraphrastic realizations that preserve those attributes. \Cref{tab:hatecheckclass} shows one class built from four HateCheck cases.

\begin{table}[t]
    \caption{Worked semantic class from HateCheck's \texttt{derog\_neg\_emote\_h} functionality (cases 1, 15, 22, and 29).}
    \label{tab:hatecheckclass}
    \small
    \renewcommand{\arraystretch}{1.08}
    \begin{tabular}{@{}p{0.25\columnwidth}p{0.67\columnwidth}@{}}
        \toprule
        Audit-relevant semantics & General negative-emotion hate directed at the target group \emph{women}. \\
        Candidate variants & ``I hate women.''; ``I despise women.''; ``I detest women.''; ``I absolutely loathe women.'' \\
        Validation rule & Keep the cases in one class only if they retain the same gold label (hateful), target identity (women), and direction (general), and do not change other policy-relevant attributes such as threat or actionability. Disputed edges are excluded or reported separately in a sensitivity analysis. \\
        \bottomrule
    \end{tabular}
\end{table}

A platform strategy is a probability distribution $x\in\Strategies$ over the variants, where $\Strategies\subseteq\Delta(\variants)$ and $\Delta(\variants)=\{x\in\mathbb{R}_{\ge 0}^{\variants}:\sum_v x_v=1\}$ is the probability simplex. Throughout the core theorems, we take $\Strategies=\Delta(\variants)$, so the platform may concentrate any amount of recommendation mass on any individual variant; this is the worst case for the auditor and is what makes the manipulation-invariance property a meaningful guarantee. The invariance and certification results hold for any $\Strategies\subseteq\Delta(\variants)$ that is closed under within-class mass redistribution. Intuitively, $x_v$ is the fraction of recommendation opportunities in which the platform exposes variant $v$ to the audited population. The platform also receives utility $u(v)\geq 0$ from exposing $v$ (for example, an engagement or watch-time proxy).

\begin{definition}[Exposure, utility, and metric]
For a strategy $x\in\Strategies$,
\[
    \Harm(x)=\sum_{v\in\variants} x_v h(v),\qquad
    \Utility(x)=\sum_{v\in\variants} x_v u(v).
\]
An audit metric is induced by a score function $m:\variants\to [0,1]$:
\[
    \Metric_m(x)=\sum_{v\in\variants} x_v m(v).
\]
\end{definition}

We write $\Harm(x)$ for the latent harmful-exposure mass under ideal-case classwise harm labeling. When we relax harm-purity in \Cref{sec:imperfect}, we distinguish $\TrueHarm(x)=\sum_v x_v h^{\star}(v)$ (latent variant-level harm) from $\AudHarm(x)=\sum_v x_v \hat h(\classof{v})$ (the audit's classwise estimate); under harm-purity, these coincide.

The platform is also strategic. It can choose both \emph{what} to expose and \emph{which variant} of a semantic class to expose. The auditor accepts a deployment if the measured score is at most a budget $\tau$:
\[
    \Metric_m(x)\leq \tau.
\]
The platform then solves a utility-maximization problem subject to the audit budget. Because our later certificates quantify over all $x\in\Strategies$, the metric guarantees do not depend on any particular utility model; utility enters only when we want to select one adversarial best response in synthetic experiments.

This is a Stackelberg-style audit game in which the auditor publishes the protocol, scoring rule, and threshold, and the platform best-responds. We focus on measurement gaming by the platform and not on adversarial users or creators. Recent work has shown that user-side manipulation of risk-controlling recommenders is also a realistic threat \citep{De-Toni2026-xk}; our framework targets the platform itself.

\paragraph{Admissible manipulations.}
The platform may replace a variant $v$ with any other variant $v'$ with $\classof{v'}=\classof{v}$. This captures semantic-preserving rewrites, relabelings, caption changes, formatting changes, and multiple audit views that alter the measured score without changing the underlying harmfulness \emph{under the published protocol}. If the platform can generate new candidates from a known transformation family, the model extends by closing $\variants$ under that family. If it can change the underlying harmfulness itself or if the protocol is renegotiated, that is outside within-class manipulation and must be analyzed separately.

\section{Robustness Properties}
\label{sec:props-robustness}

\begin{definition}[Manipulation invariance]
A metric $\Metric_m$ is manipulation invariant if for all strategies $x,x'\in\Strategies$ with the same induced mass on semantic classes,
\[
    \sum_{v:\classof{v}=c} x_v = \sum_{v:\classof{v}=c} x'_v \quad\forall c\in\classes,
\]
we have $\Metric_m(x)=\Metric_m(x')$.
\end{definition}

Manipulation invariance states that the metric depends only on the semantically relevant content mix and not on the platform's choice of representation within a class.

\begin{definition}[Certification]
A metric $\Metric_m$ is $(\gamma,\beta)$-certifying over $\Strategies$ if
\[
    \Harm(x)\leq \gamma \Metric_m(x)+\beta
    \quad\text{for all }x\in\Strategies.
\]
\end{definition}

This definition is permissive on its own: any non-negative metric is trivially $(\gamma,1)$-certifying (since $\Harm(x)\le 1$), and the fragile metric is $(1/\delta,0)$-certifying with $\delta=\min_{c:h(c)=1}\min_{u\in c} m(u)$ whenever every harmful variant has a strictly positive score. The interesting question is whether the constants $\gamma,\beta$ come from a predeclared classwise structure that an auditor publishes \emph{before} the platform's strategy is observed. We capture this with a per-class coverage predicate.

\begin{definition}[$\varepsilon$-strict class-coverage certificate]
    \label{def:eps-strict}
    Let $\mathcal{P}=(P_c)_{c\in\classes}$ be a published per-class score profile with $P_c\in[0,1]$. A metric $\Metric_m$ admits an $\varepsilon$-strict class-coverage certificate via $\mathcal{P}$ at slack $\beta$ if
    \[
        \begin{aligned}
            &m(v)\ge P_{\classof{v}} \text{ for every } v\in\variants \text{ with } h(\classof{v})=1, \\
            &\min_{c:h(c)=1} P_c \ge \varepsilon,
        \end{aligned}
    \]
    and $\Metric_m$ is $(1/\varepsilon,\beta)$-certifying over $\Strategies$. The certificate is \emph{useful at audit budget $\tau$} if $\tau/\varepsilon+\beta<1$. When the certificate is read against latent rather than ideal harm via \Cref{cor:disagreement}, the slack $\beta=\bar\eta$ must itself be a per-variant disagreement bound (e.g., a Bernoulli error rate uniform over harmful-class variants); a class-average annotation agreement statistic is \emph{not} a defensible $\bar\eta$, because the platform can concentrate exposure on the small fraction of disputed variants---see the remark after \Cref{cor:disagreement}.
\end{definition}

Two aspects of this definition are formal rather than stylistic. First, the classwise lower bound is structural: the envelope satisfies $\Env(m)(v)\ge\alpha_c$ on every variant of every harmful class $c$ by construction, so the envelope's $\varepsilon$ equals the published coverage $\hat\alpha=\min_c\alpha_c$. The fragile metric satisfies this only for $P_c\le \min_{u\in c}m(u)$, which collapses $\varepsilon$ to the smallest manipulated-variant score in any harmful class. Second, the strictness threshold $\varepsilon$ is published before the platform moves and cannot be reverse-engineered from any particular adversarial strategy.

In particular, on the deterministic catalog of \Cref{tab:detcatalog}, the envelope admits a $0.85$-strict certificate (every harmful class has $\alpha_c\ge 0.85$), and at $\tau=0.20$ the certified ceiling $\tau/0.85=0.235$ is useful. The fragile metric admits at most a $0.10$-strict certificate (the smallest harmful-variant score is $0.10$ on $H2$\,\texttt{manip}); at $\tau=0.20$ this gives ceiling $\tau/0.10=2.0$, which is vacuous. The class-mean repair admits a $0.475$-strict certificate at $\tau=0.20$ giving ceiling $0.421$, useful but looser than the envelope. \Cref{cor:certificate} can therefore be read as showing that the envelope achieves the largest $\varepsilon$ available from the observed within-class maximum scores without inflating any class above an observed score, i.e., it is the strongest certificate among repairs that are classwise constant, pointwise conservative, and non-inflating beyond the observed class maximum. \Cref{cor:disagreement} extends this to $\beta=\bar\eta$ when harm-purity is relaxed. When we say below that the fragile metric \emph{violates the envelope-style class-coverage certificate}, we mean exactly this: it cannot match the envelope's $\varepsilon$ on the same published protocol within this comparison class. A useful audit requires both invariance to pure metric gaming and a useful $\varepsilon$-strict certificate.

\section{The Semantic-Envelope Repair}

\subsection{Fragility of direct scoring}

The most obvious audit metric scores each exposed variant directly using a classifier or rule-based detector. We call this the \emph{fragile metric}: $m_{\Frag}(v)=m(v)$.

\begin{proposition}[Direct variant scoring is manipulable]
\label{prop:fragile}
    Suppose there exists a semantic class $c$ with two variants $v_1,v_2\in c$ such that $m(v_1)>m(v_2)$. Then, $\Metric_{m}$ is not manipulation invariant.
\end{proposition}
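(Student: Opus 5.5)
The plan is to exhibit an explicit pair of strategies that witnesses the failure of manipulation invariance directly from its definition. Since the core theorems take $\Strategies=\Delta(\variants)$, and more generally the setup assumes $\Strategies$ is closed under within-class mass redistribution, the natural witnesses are the point masses $x=e_{v_1}$ and $x'=e_{v_2}$. These are the distributions placing all recommendation mass on $v_1$ and on $v_2$ respectively, and both lie in $\Delta(\variants)$.

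First I will check that $x$ and $x'$ induce the same class masses. Because $\classof{v_1}=\classof{v_2}=c$, both strategies put mass $1$ on class $c$ and mass $0$ on every other class $c'\neq c$. The hypothesis of the manipulation-invariance definition is therefore satisfied.

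Second I will evaluate the metric on both strategies. By linearity of $\Metric_m$, we have $\Metric_m(x)=m(v_1)$ and $\Metric_m(x')=m(v_2)$. By assumption $m(v_1)>m(v_2)$, so $\Metric_m(x)\neq\Metric_m(x')$. This contradicts the conclusion required by manipulation invariance, which proves the proposition.

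The only step needing care is the membership of the witnesses in $\Strategies$ when $\Strategies$ is a proper subset of $\Delta(\variants)$. In that case I would instead start from any $x\in\Strategies$ with $x_{v_1}=\lambda>0$, and let $x'$ move that mass $\lambda$ from $v_1$ to $v_2$. This $x'$ is admissible by closure under within-class redistribution. The metric then changes by $\lambda\bigl(m(v_2)-m(v_1)\bigr)<0$, so the same conclusion follows. I would also remark that the platform strictly lowers its score by this move, which shows the direction of gaming, although the proposition does not require it.
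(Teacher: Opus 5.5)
Your proof is correct and matches the paper's own argument: the point masses on $v_1$ and $v_2$ induce identical class masses but give metric values $m(v_1)\neq m(v_2)$. Your extra remark for a proper subset $\Strategies$ of $\Delta(\variants)$ silently assumes that some admissible strategy places positive mass on $c$ (closure under within-class redistribution then lets you move that mass onto $v_1$); without that assumption the claim can fail, though this case goes beyond what the paper proves.
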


\begin{proof}
    Let $x$ be the strategy that places unit mass on $v_1$, and let $x'$ place unit mass on $v_2$. Both $x$ and $x'$ assign the same total mass to class $c$ and zero mass to every other class; thus, they induce an identical class distribution and are therefore representation-equivalent in the sense of manipulation invariance. However, $\Metric_m(x)=m(v_1)$ and $\Metric_m(x')=m(v_2)$, and by hypothesis, $m(v_1)\neq m(v_2)$. Hence, $\Metric_m$ takes different values on representation-equivalent strategies, which is the definition of failing manipulation invariance.
\end{proof}

\Cref{prop:fragile} formalizes the core Goodhart failure mode: if the platform can change the measured score by swapping semantically equivalent variants, then an improved metric does not imply reduced harm.

\subsection{Semantic envelopes}

We repair this failure by lifting each variant's score to the maximum score of its semantic class.

\begin{definition}[Semantic-envelope lift]
For any score function $m:\variants\to[0,1]$, define
\[
    \Env(m)(v)=\max_{u\in\variants:\, \classof{u}=\classof{v}} m(u).
\]
The corresponding metric is $\Metric_{\Env(m)}$.
\end{definition}

The lift is conservative: it assumes that the auditor should score a class according to its worst measured representation among admissible variants.

\begin{theorem}[Minimal conservative invariant repair]
\label{thm:minimal}
    Let $\mathcal{G}$ be the set of score functions $\tilde m:\variants\to[0,1]$ satisfying:
    \begin{enumerate}[label=(\roman*),leftmargin=1.2em]
        \item \emph{classwise constancy}: $\tilde m(v)=\tilde m(u)$ whenever $\classof{v}=\classof{u}$; and
        \item \emph{conservativeness}: $\tilde m(v)\geq m(v)$ for all $v\in\variants$.
    \end{enumerate}
    Then $\Env(m)\in\mathcal{G}$, and for every $\tilde m\in\mathcal{G}$ and every $v\in\variants$,
    \[
        \Env(m)(v)\leq \tilde m(v).
    \]
    Hence, $\Env(m)$ is the unique pointwise minimum of $\mathcal{G}$ and therefore the least conservative repair in that comparison class.
\end{theorem}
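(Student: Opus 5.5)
The plan is to verify membership of $\Env(m)$ in $\mathcal{G}$ directly from its definition, then prove the pointwise lower bound by a one-line maximization argument. Uniqueness of the pointwise minimum then follows formally from antisymmetry.

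\emph{Membership.} Because $\variants$ is finite and each class is nonempty (it contains $v$), the maximum defining $\Env(m)(v)$ exists and lies in $[0,1]$, so $\Env(m)$ is a valid score function.
\begin{itemize}
\item \emph{Classwise constancy.} If $\classof{v}=\classof{u}$, the index sets $\{w:\classof{w}=\classof{v}\}$ and $\{w:\classof{w}=\classof{u}\}$ coincide. The two maxima are therefore equal.
\item \emph{Conservativeness.} Take $w=v$ in the maximum to get $\Env(m)(v)\ge m(v)$.
\end{itemize}

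\emph{Lower bound.} Fix $\tilde m\in\mathcal{G}$ and $v\in\variants$. For every $u$ with $\classof{u}=\classof{v}$, combine conservativeness at $u$ with constancy:
\[
\tilde m(v)=\tilde m(u)\ge m(u).
\]
Thus $\tilde m(v)$ is an upper bound on $\{m(u):\classof{u}=\classof{v}\}$, and so $\tilde m(v)\ge\Env(m)(v)$.

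\emph{Uniqueness.} Suppose $g\in\mathcal{G}$ is also a pointwise minimum. Then $g\le\Env(m)$ because $\Env(m)\in\mathcal{G}$, and $\Env(m)\le g$ by the lower bound. Hence $g=\Env(m)$. The phrase ``least conservative repair'' is just a restatement of this.

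There is no real obstacle here. The only point needing care is existence of the maximum, which uses finiteness of $\variants$ and nonemptiness of classes. I would note this explicitly rather than pass to a supremum.
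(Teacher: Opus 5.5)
Your proof is correct and follows the paper's argument essentially step for step. Membership comes from the maximizing set depending only on the class and from $v$ lying in its own class; minimality comes from $\tilde m(v)=\tilde m(u)\ge m(u)$ for every $u$ in the class of $v$. Your remarks on existence of the maximum (finiteness of $\variants$) and the antisymmetry argument for uniqueness only make explicit what the paper leaves implicit.
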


\begin{proof}
    Class-wise constancy follows because the maximizing set depends only on the class. Conservativeness follows because $v$ belongs to its own class; therefore, $\Env(m)(v)\geq m(v)$. For minimality, let $\tilde m\in\mathcal{G}$ and fix a variant $v$. For every $u$ in the class of $v$, conservativeness yields $\tilde m(u)\geq m(u)$, whereas classwise constancy yields $\tilde m(u)=\tilde m(v)$. Therefore, $\tilde m(v)$ is an upper bound on $\{m(u):\classof{u}=\classof{v}\}$, implying $\tilde m(v)\geq \max_{u:\classof{u}=\classof{v}} m(u)=\Env(m)(v)$.
    \end{proof}
    
    \begin{remark}[Why max rather than mean or quantile?]
    Any classwise-constant aggregator restores manipulation invariance, and any such aggregator with strictly positive scores on harmful classes admits a certificate of the form $\Harm(x)\le (1/\alpha_g)\Metric_g(x)$, where $\alpha_g=\min_{c:h(c)=1} g(c)>0$. Therefore, mean and quantile repairs can also \emph{certify} harm, just with different (typically smaller) coverage constants. What distinguishes the semantic envelope is \emph{pointwise conservativeness}: it never lowers the score of a class relative to any already-observed variant. Mean or quantile repairs can recover utility; however, they can also down-score a class that contains a known high-risk representation, weakening the audit's conservative interpretation. \Cref{thm:minimal} therefore characterizes the envelope as the minimal repair in the safety-oriented (pointwise-conservative) comparison class, not as the only invariant repair that admits any certificate at all.
\end{remark}

\begin{proposition}[Manipulation invariance of semantic envelopes]
\label{prop:invariant}
    $\Metric_{\Env(m)}$ is manipulation invariant.
\end{proposition}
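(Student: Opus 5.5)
The plan is to rewrite $\Metric_{\Env(m)}(x)$ as a function of the induced class masses alone. Manipulation invariance then follows immediately. The only input needed is the classwise-constancy property of the lift, established in the first part of the proof of \Cref{thm:minimal}. For each class $c\in\classes$, I define the constant $\alpha_c=\max_{u:\classof{u}=c} m(u)$. The maximum exists because $\variants$ is finite and $c$ is a nonempty class. Then $\Env(m)(v)=\alpha_{\classof{v}}$ for every $v\in\variants$.

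Next, since $\classes$ partitions $\variants$, I regroup the defining sum class by class:
\[
    \Metric_{\Env(m)}(x)=\sum_{v\in\variants} x_v\,\Env(m)(v)
    =\sum_{c\in\classes}\sum_{v:\classof{v}=c} x_v\,\alpha_c
    =\sum_{c\in\classes}\alpha_c\,X_c(x),
\]
where $X_c(x)=\sum_{v:\classof{v}=c}x_v$ is the class mass of $x$ on $c$. Regrouping a finite sum over a partition needs no further justification.

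Finally, I take any $x,x'\in\Strategies$ with $X_c(x)=X_c(x')$ for all $c\in\classes$, which is exactly the hypothesis in the definition of manipulation invariance. The displayed identity then gives $\Metric_{\Env(m)}(x)=\sum_c\alpha_c X_c(x)=\sum_c\alpha_c X_c(x')=\Metric_{\Env(m)}(x')$. The argument never uses the particular structure of $\Strategies$, so it holds for any $\Strategies\subseteq\Delta(\variants)$.

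There is no real obstacle. The one point to state carefully is that the classes genuinely partition $\variants$: they are the equivalence classes of the reflexive-symmetric-transitive closure of $E_\rho$, so each variant lies in exactly one class. This is what makes the regrouping exact, with no double counting and no variant omitted.
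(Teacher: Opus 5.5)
Your proof is correct and follows the paper's route: the paper's proof is the one-line observation that, because $\Env(m)$ is classwise constant, $\Metric_{\Env(m)}(x)$ depends only on the mass $x$ assigns to each class. You make the same argument explicit by writing $\Metric_{\Env(m)}(x)=\sum_{c}\alpha_c X_c(x)$, which spells out the regrouping step the paper leaves implicit.
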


\begin{proof}
    Because $\Env(m)$ is a classwise constant, $\Metric_{\Env(m)}(x)$ depends only on the total mass that $x$ assigns to each class.
\end{proof}

\begin{proposition}[Monotonicity under protocol refinement]
\label{prop:refinement}
    Let $P_f$ and $P_c$ be two partitions of $\variants$ such that $P_f$ refines $P_c$. For any partition $P$, define
    \[
        \Env_{P}(m)(v)=\max_{u\in\variants:\,P(u)=P(v)} m(u).
    \]
    Then for every $v\in\variants$,
    \[
        \Env_{P_f}(m)(v)\leq \Env_{P_c}(m)(v).
    \]
\end{proposition}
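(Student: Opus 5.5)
The argument is a set-inclusion argument: the fine block containing $v$ sits inside the coarse block containing $v$, so the maximum over the smaller block is at most the maximum over the larger one. I would first make the refinement hypothesis precise. Saying $P_f$ refines $P_c$ means every block of $P_f$ is contained in some block of $P_c$. Equivalently, writing $P(u)$ for the block of $P$ containing $u$, we have $P_f(u)=P_f(v)\Rightarrow P_c(u)=P_c(v)$ for all $u,v\in\variants$.

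Next, fix $v\in\variants$ and name the two index sets in the definition of $\Env_P$:
\[
S_f(v)=\{u\in\variants: P_f(u)=P_f(v)\},\qquad S_c(v)=\{u\in\variants: P_c(u)=P_c(v)\}.
\]
By the refinement implication, $S_f(v)\subseteq S_c(v)$. Both sets are nonempty, since each contains $v$ by reflexivity. Both are finite, because $\variants$ is finite by the standing assumption. So both maxima exist and are attained.

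Then conclude by monotonicity of the maximum under set inclusion. Let $u^\ast\in S_f(v)$ attain $\Env_{P_f}(m)(v)=m(u^\ast)$. Since $u^\ast\in S_c(v)$, we get $m(u^\ast)\le \max_{u\in S_c(v)} m(u)=\Env_{P_c}(m)(v)$. Alternatively, one can mirror the minimality step of \Cref{thm:minimal}: $\Env_{P_c}(m)$ is constant on each $P_c$-block, hence on each $P_f$-block, and it dominates $m$. It therefore lies in the class $\mathcal{G}$ defined relative to $P_f$, and \Cref{thm:minimal} applied with partition $P_f$ gives $\Env_{P_f}(m)\le \Env_{P_c}(m)$ pointwise. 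I would mention this as the conceptual reason but use the direct argument.

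There is no substantive obstacle. The only point needing care is the first step: fixing the direction of refinement, so that fine blocks lie inside coarse ones and not the reverse. I would also note how this connects to the protocol. Raising $\rho$ or deleting edges from $T_0$ shrinks $E_\rho$, and the equivalence closure of a subset of edges is contained in the closure of the full edge set. The induced partition therefore refines the original one, and the proposition applies to these protocol changes.
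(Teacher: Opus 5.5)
Your proposal is correct and is the paper's proof: the $P_f$-cell of $v$ is contained in its $P_c$-cell, so the maximum over the smaller set cannot exceed the maximum over the larger one. Your care about the direction of refinement, nonemptiness, and finiteness is sound but adds nothing essential. The alternative route through \Cref{thm:minimal} is also valid, but it is not needed.
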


\begin{proof}
    The cell of $v$ under $P_f$ is a subset of its cell under $P_c$; therefore, the maximum over the smaller set cannot exceed that over the larger set.
\end{proof}

\begin{remark}[Protocol sensitivity]
    If the auditor tightens validation, raises the acceptance threshold, or removes disputed equivalence edges, the induced partition refines, and the envelope weakly decreases pointwise by \Cref{prop:refinement}. Reporting results for a coarse partition $P_c$ and a fine partition $P_f\sqsubseteq P_c$ therefore yields a \emph{metric-sensitivity band}: a pair of envelope scores satisfying $\Env_{P_f}(m)(v)\le \Env_{P_c}(m)(v)$ for every $v$. To translate this into a harm-ceiling band, the auditor must also publish the corresponding $\hat\alpha$ and $\bar\eta$ for each protocol and instantiate \Cref{cor:disagreement}; refining the partition can lower the envelope score and simultaneously lower $\hat\alpha$; therefore, a tighter score does not automatically imply a tighter harm bound. \Cref{app:protocol-sensitivity} exhibits exactly this trade-off on the six-variant model with thresholds $\rho\in\{0.70,0.90\}$. This does not make incorrect merges harmless; if a published class mixes items with different latent harm labels, the purity assumption fails, and the disagreement term of \Cref{thm:disagreement} must absorb that error explicitly.
\end{remark}

\subsection{Certification}

Manipulation invariance alone is insufficient. We also want the score to upper-bound true harmful exposure.

\begin{theorem}[Class-stratified certification]
\label{thm:classwise}
    For a harmful class $c$ with $h(c)=1$, define its envelope score
    \[
        \alpha_c = \max_{u:\classof{u}=c} m(u).
    \]
    If $\alpha_c>0$, then every strategy $x\in\Strategies$ satisfies
    \[
        \sum_{v:\classof{v}=c} x_v
        \leq
        \frac{1}{\alpha_c}\sum_{v:\classof{v}=c} x_v \Env(m)(v).
    \]
    Consequently, for every set $S$ of harmful classes with $\alpha_c>0$ for all $c\in S$,
    \[
        \sum_{c\in S}\sum_{v:\classof{v}=c} x_v
        \leq
        \sum_{c\in S}\frac{1}{\alpha_c}\sum_{v:\classof{v}=c} x_v \Env(m)(v).
    \]
\end{theorem}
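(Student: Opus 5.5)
The plan is a direct pointwise argument: the envelope is constant on each class, so the per-class bound should reduce to an identity and the aggregate bound to a sum of those identities. Fix a harmful class $c$ with $h(c)=1$ and $\alpha_c>0$.

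\emph{Step 1: evaluate the envelope on $c$.} By the definition of the semantic-envelope lift, every $v$ with $\classof{v}=c$ satisfies $\Env(m)(v)=\max_{u:\classof{u}=c}m(u)=\alpha_c$. This is the classwise constancy already used in \Cref{thm:minimal} and \Cref{prop:invariant}. Hence
\[
    \sum_{v:\classof{v}=c} x_v \Env(m)(v)=\alpha_c\sum_{v:\classof{v}=c} x_v .
\]

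\emph{Step 2: divide by $\alpha_c$.} Since $\alpha_c>0$, dividing the identity of Step 1 by $\alpha_c$ gives the per-class statement, in fact with equality. The argument uses only nonnegativity of $x$ and makes no appeal to $x\in\Delta(\variants)$ beyond that. The claim therefore holds for every $x\in\Strategies$, independently of the utility $u$ and of any audit budget $\tau$.

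\emph{Step 3: aggregate over $S$.} For a set $S$ of harmful classes with $\alpha_c>0$ for all $c\in S$, apply Step 2 to each $c\in S$ and sum the resulting (in)equalities over $c\in S$. The classes are disjoint cells of the partition $\classes$, so there is no double counting. The left-hand side becomes the total mass on $S$, which gives the second display.

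The only point needing care, rather than being a genuine obstacle, is making sure the maximum defining $\alpha_c$ is taken over the same class as the envelope evaluated at each $v\in c$. This holds because $\classof{v}=c$ for those $v$. It is also worth remarking that the inequality is tight, so any slack in later certificates (e.g., replacing $\alpha_c$ by $\hat\alpha=\min_c\alpha_c$ when summing to obtain \Cref{cor:certificate}) comes from that uniformization, not from this theorem.
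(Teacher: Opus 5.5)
Your proof is correct and is essentially the paper's own argument: classwise constancy of $\Env(m)$ on $c$ turns the per-class bound into the identity $\sum_{v:\classof{v}=c} x_v\Env(m)(v)=\alpha_c\sum_{v:\classof{v}=c} x_v$, and summing over the disjoint classes in $S$ gives the aggregate statement. Two asides in your remarks are slightly off, though neither affects the proof: the identity holds for any real $x$, so nonnegativity is not used at all, and the slack in \Cref{cor:certificate} comes both from replacing $\alpha_c$ by the minimum $\alpha$ and from dropping the nonnegative benign-class terms of $\Metric_{\Env(m)}(x)$.
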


\begin{proof}
    Within a fixed harmful class $c$, every variant has an envelope score exactly $\alpha_c$. Hence
    \[
    \sum_{v:\classof{v}=c} x_v \Env(m)(v)=\alpha_c \sum_{v:\classof{v}=c} x_v,
    \]
    Dividing by $\alpha_c$ gives the first inequality as an equality. Summing over classes in $S$ gives the second statement, that is,
\end{proof}

\begin{corollary}[Global certification by lower-bounded harmful classes]
\label{cor:certificate}
    Let
    \[
        \alpha = \min_{c\in\classes:\, h(c)=1} \max_{u:\classof{u}=c} m(u).
    \]
    If $\alpha>0$, then for every strategy $x\in\Strategies$,
    \[
        \Harm(x)\leq \frac{1}{\alpha}\Metric_{\Env(m)}(x).
    \]
    Thus the semantic-envelope metric is $(1/\alpha,0)$-certifying in the ideal harm-pure case.
\end{corollary}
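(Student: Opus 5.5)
The plan is to reduce the global statement to \Cref{thm:classwise}, applied with $S$ equal to the set of all harmful classes. First I would note that $\alpha>0$ forces $\alpha_c\ge\alpha>0$ for every harmful class $c$, because $\alpha$ is the minimum of the $\alpha_c$ over $\{c:h(c)=1\}$. So the hypothesis of \Cref{thm:classwise} holds for every harmful class simultaneously. If there are no harmful classes, $\Harm(x)=0$ and the claim is trivial, since $\Metric_{\Env(m)}(x)\ge 0$.

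Next I would rewrite the harm by grouping variants by class. Under harm-purity $h(v)=h(\classof{v})\in\{0,1\}$, so
\[
\Harm(x)=\sum_{c:\,h(c)=1}\ \sum_{v:\classof{v}=c} x_v .
\]
Applying the summed inequality of \Cref{thm:classwise} with $S=\{c:h(c)=1\}$ bounds this by $\sum_{c\in S}\alpha_c^{-1}\sum_{v\in c}x_v\Env(m)(v)$. Then I use $1/\alpha_c\le 1/\alpha$ on each term. This is legitimate because each inner sum is nonnegative: $x_v\ge 0$ and $\Env(m)(v)\ge 0$ since $m$ takes values in $[0,1]$. The result is the bound $\alpha^{-1}\sum_{c\in S}\sum_{v\in c}x_v\Env(m)(v)$.

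Finally I would enlarge the sum from harmful classes to all classes, again using nonnegativity of $x_v\Env(m)(v)$. Since the classes partition $\variants$, the full double sum equals $\Metric_{\Env(m)}(x)$. This gives $\Harm(x)\le\alpha^{-1}\Metric_{\Env(m)}(x)$ for every $x\in\Strategies$, which is exactly $(1/\alpha,0)$-certification.

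There is no real obstacle. The only points needing care are the two monotonicity steps: replacing $1/\alpha_c$ by $1/\alpha$, and dropping the non-harmful classes. Both require the nonnegativity of the summands, and that comes from $x\in\Delta(\variants)$ and $m\ge 0$. I would state it explicitly rather than leave it implicit.
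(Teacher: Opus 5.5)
Your proposal is correct and follows the paper's own proof, which applies \Cref{thm:classwise} to every harmful class and then uses $\alpha_c\ge\alpha$. Your version is the same argument with two steps written out that the paper's one-line proof leaves implicit: the nonnegativity that justifies replacing $1/\alpha_c$ by $1/\alpha$, and adding back the benign classes' nonnegative terms to recover $\Metric_{\Env(m)}(x)$.
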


\begin{proof}
    Apply \Cref{thm:classwise} to every harmful class and use $\alpha_c\geq \alpha$.
\end{proof}

\begin{remark}[Interpreting $\alpha$ and the coverage prerequisite]
    The corollary is a coarse summary of a richer classwise profile. It is useful only when low-coverage harmful classes are absent or rare. Therefore, we treat the classwise coverage check as a prerequisite rather than a post-hoc safeguard: before invoking \Cref{cor:certificate}, the audit should publish the full distribution $\{\alpha_c\}_{c:h(c)=1}$, flag any class with $\alpha_c$ below a pre-declared detection threshold $\eta$, and separately flag any class with $\alpha_c=0$. A platform that can engineer a harmful class in which all admissible variants receive a vanishing score trivially defeats the aggregate bound for that class; no metric-level certificate can repair this without redesigning the measurement process. As a stylized illustration, suppose a single harmful class has $\alpha_c=0.05$---every admissible variant is scored below $0.05$ by the classifier---and suppose $\alpha_c$ sets the global $\alpha$. At a measured budget $\tau=0.20$, \Cref{cor:certificate} then gives $\Harm(x)\le \tau/\alpha=4$, which is vacuous for any strategy because $\Harm(x)\in[0,1]$. The protocol-level fix is structural: split the low-coverage class off the global certificate, report it as \emph{uncertified}, and either improve the classifier on that class or exclude it from the aggregate bound until coverage is restored. Averaging such a class into the global $\alpha$ silently degrades the certificate and is precisely the failure mode the prerequisite is meant to prevent. Section~\ref{sec:random} reports the empirical distribution of $\alpha$ across the synthetic catalogs.
\end{remark}

\subsection{Imperfect protocol construction and annotation}
\label{sec:imperfect}
\label{sec:uncertainty}

The preceding theorem family assumes that published classes are harm-pure and correctly labeled. To expose the cost of violating this assumption, we now separate the audited class labels from latent variant-level harm.

\begin{definition}[Audited labels and disagreement mass]
    Let $\hat h:\classes\to\{0,1\}$ be the audited class label published with the protocol, and let $h^{\star}:\variants\to\{0,1\}$ be the latent variant-level harm. Define the audited harmful exposure
    \[
        \AudHarm(x)=\sum_{v\in\variants} x_v \hat h(\classof{v}),
    \]
    the true harmful exposure
    \[
        \TrueHarm(x)=\sum_{v\in\variants} x_v h^{\star}(v),
    \]
    and the disagreement mass
    \[
        \Dmass(x)=\sum_{v\in\variants} x_v \bigl|h^{\star}(v)-\hat h(\classof{v})\bigr|.
    \]
\end{definition}

As written, $\Dmass(x)$ is the deterministic mass that the strategy $x$ places on variants whose latent harm disagrees with their audited class label; the stochastic-annotation reading, in which $\hat h(\classof{v})$ is a Bernoulli with class-conditional error $\epsilon_c$, follows by taking expectations and is treated explicitly in the remark after \Cref{cor:disagreement}. The disagreement term captures both annotation and protocol errors: if a class merges items whose latent harm differs, or if the audited class label is incorrect, the certificate degrades through $\Dmass(x)$ rather than silently failing.

\begin{theorem}[Certification with disagreement slack]
\label{thm:disagreement}
    Let
    \[
        \hat\alpha = \min_{c\in\classes:\,\hat h(c)=1}\max_{u:\classof{u}=c} m(u).
    \]
    If $\hat\alpha>0$, then every strategy $x\in\Strategies$ satisfies
    \[
        \TrueHarm(x)\leq \frac{1}{\hat\alpha}\Metric_{\Env(m)}(x)+\Dmass(x).
    \]
\end{theorem}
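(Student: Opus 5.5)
The plan is to split $\TrueHarm(x)$ into the audited exposure plus an error term, and then bound the audited exposure by the ideal-case argument of \Cref{thm:classwise} and \Cref{cor:certificate}, with $\hat h$ playing the role of $h$.

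\textbf{Step 1 (decomposition).} For each variant $v$, since $h^{\star}(v)$ and $\hat h(\classof{v})$ lie in $\{0,1\}$, we have the pointwise inequality
\[
h^{\star}(v)\le \hat h(\classof{v})+\bigl|h^{\star}(v)-\hat h(\classof{v})\bigr|.
\]
Multiply by $x_v\ge 0$ and sum over $v\in\variants$. This gives $\TrueHarm(x)\le \AudHarm(x)+\Dmass(x)$ for every $x\in\Strategies$. Nothing here uses harm-purity or any structure of $\Strategies$ beyond nonnegativity.

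\textbf{Step 2 (audited exposure via the envelope).} Regroup $\AudHarm(x)=\sum_{c:\hat h(c)=1}\sum_{v:\classof{v}=c}x_v$. For any class $c$ with $\hat h(c)=1$, set $\alpha_c=\max_{u:\classof{u}=c}m(u)$, so that $\alpha_c\ge\hat\alpha>0$. Every $v$ in $c$ has $\Env(m)(v)=\alpha_c$, hence
\[
\sum_{v:\classof{v}=c}x_v=\frac{1}{\alpha_c}\sum_{v:\classof{v}=c}x_v\Env(m)(v)\le\frac{1}{\hat\alpha}\sum_{v:\classof{v}=c}x_v\Env(m)(v).
\]
This is exactly the computation in \Cref{thm:classwise}. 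The proof of that theorem only uses that $c$ is a class with a positive envelope score, not the meaning of the label, so it transfers verbatim to $\hat h$.

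\textbf{Step 3 (extend to all classes).} Summing over the audited-harmful classes and adding back the remaining classes, whose terms $x_v\Env(m)(v)$ are nonnegative because $m\ge 0$, gives $\AudHarm(x)\le \frac{1}{\hat\alpha}\Metric_{\Env(m)}(x)$. Combining this with Step 1 yields the claim.

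\textbf{Main obstacle.} The only point needing care is Step 1: keeping the error term one-sided as a pointwise inequality rather than an equality. I would state it explicitly for the case $h^{\star}(v)=1$, $\hat h=0$, where the absolute value contributes exactly $1$. The other cases make the slack nonnegative trivially. I also note that the argument is utility-agnostic and holds for every $x\in\Delta(\variants)$.
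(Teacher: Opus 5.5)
Your proposal is correct and follows essentially the same route as the paper: the pointwise inequality summed to give $\TrueHarm(x)\le\AudHarm(x)+\Dmass(x)$, then the classwise envelope bound over audited-harmful classes using $\alpha_c\ge\hat\alpha$, then dropping the nonnegative remaining terms. (Minor note: the inequality $a\le b+|a-b|$ holds for all reals, so you do not need the binary labels in Step 1.)
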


\begin{proof}
    For each variant $v$,
    \[
        h^{\star}(v)\leq \hat h(\classof{v}) + \bigl|h^{\star}(v)-\hat h(\classof{v})\bigr|.
    \]
    Multiplying by $x_v$ and summing over $v$ gives
    \[
        \TrueHarm(x)\leq \AudHarm(x)+\Dmass(x).
    \]
    Now every class with $\hat h(c)=1$ has envelope score at least $\hat\alpha$, so
    \begin{align*}
        \AudHarm(x)
        &= \sum_{c:\hat h(c)=1}\sum_{v:\classof{v}=c} x_v \\
        &\leq \frac{1}{\hat\alpha}\sum_{c:\hat h(c)=1}\sum_{v:\classof{v}=c} x_v \Env(m)(v) \\
        &\leq \frac{1}{\hat\alpha}\Metric_{\Env(m)}(x).
    \end{align*}
    The claim is proven by combining the two inequalities.
\end{proof}

\begin{corollary}[Bounded-disagreement certification]
\label{cor:disagreement}
    If the published protocol and annotation pipeline guarantee $\Dmass(x)\leq \bar\eta$ for all admissible strategies $x$, then
    \[
        \TrueHarm(x)\leq \frac{1}{\hat\alpha}\Metric_{\Env(m)}(x)+\bar\eta.
    \]
    Thus the semantic-envelope metric is $(1/\hat\alpha,\bar\eta)$-certifying with respect to latent harm.
\end{corollary}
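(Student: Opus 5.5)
This corollary follows almost immediately from \Cref{thm:disagreement}; the plan is to specialize that theorem and then substitute the assumed uniform bound on the disagreement mass. First, I take the hypothesis $\hat\alpha>0$ as inherited: the corollary is only meaningful when $\hat\alpha$ is well defined and positive, exactly as in \Cref{thm:disagreement}. Under this hypothesis, \Cref{thm:disagreement} gives, for every $x\in\Strategies$,
\[
    \TrueHarm(x)\leq \frac{1}{\hat\alpha}\Metric_{\Env(m)}(x)+\Dmass(x).
\]

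The second step is to fix an arbitrary admissible $x$ and invoke the assumed guarantee $\Dmass(x)\leq\bar\eta$. Since the first term on the right is unaffected and addition is monotone, replacing $\Dmass(x)$ by $\bar\eta$ can only increase the right-hand side. This yields the displayed inequality for that $x$. Because $x$ was arbitrary and the bound $\bar\eta$ is uniform over the whole strategy set, the inequality holds for all $x\in\Strategies$.

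Finally, I compare this with the definition of $(\gamma,\beta)$-certification, reading $\Harm$ as $\TrueHarm$ ("with respect to latent harm"), with $\gamma=1/\hat\alpha$ and $\beta=\bar\eta$. That gives the stated $(1/\hat\alpha,\bar\eta)$-certifying conclusion.

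The only point needing care, and the closest thing to an obstacle, is quantifier order. $\bar\eta$ must bound $\Dmass(x)$ uniformly over all admissible strategies, including adversarial ones that concentrate mass on disputed variants. It cannot be merely an average, since otherwise the substitution step fails. Since this uniformity is exactly the corollary's hypothesis, no further work is needed.
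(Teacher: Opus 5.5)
Your proposal is correct and matches the paper's argument: the paper gives no separate proof, treating the corollary as immediate from \Cref{thm:disagreement} by substituting the uniform bound $\Dmass(x)\le\bar\eta$, with $\hat\alpha>0$ inherited from that theorem. Your remark on quantifier order, that $\bar\eta$ must bound $\Dmass(x)$ uniformly over adversarial strategies rather than on average, is the same point the paper makes in the remark following the corollary.
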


\begin{remark}[From agreement studies to $\bar\eta$, with an adversarial caveat]
    The theorem isolates exactly where annotation uncertainty enters; however, the calibration must be adversarially valid because the platform can place an arbitrary mass on any single variant: $\sup_{x\in\Strategies}\Dmass(x)=1$ as soon as a single variant in any harmful class is audited-wrong; therefore, a \emph{class-average} agreement rate is not by itself a defensible $\bar\eta$. Two adversarially valid calibrations are available. In the deterministic case, the auditor validates \emph{every} variant in each harmful class to confirm $h^{\star}(v)=\hat h(\classof{v})$ pointwise; then, $\Dmass(x)=0$ on the audited deployment and \Cref{cor:disagreement} reduces to \Cref{cor:certificate}. In the stochastic case, the auditor reports a \emph{per-variant} Bernoulli error bound $\epsilon_c$ such that
    \[
        \Pr[h^{\star}(v)\neq \hat h(\classof{v})]\leq \epsilon_c \quad\text{for every } v\in\variants \text{ with } \classof{v}=c,
    \]
    then, writing $\mu_x(c)=\sum_{v:\classof{v}=c}x_v$,
    \[
        \mathbb{E}[\Dmass(x)]\leq \sum_{c\in\classes} \mu_x(c)\epsilon_c \leq \epsilon_{\max},
        \qquad
        \epsilon_{\max}=\max_{c\in\classes}\epsilon_c.
    \]
    A class-average annotation agreement rate (such as a HateCheck-style $\kappa=0.82$) is \emph{not} a substitute for per-variant $\epsilon_c$ in this bound because the platform can concentrate exposure on the small fraction of disputed variants and inflate $\Dmass(x)$ well past $1-\kappa$. An auditor who has \emph{only} a class-average $\kappa$ has two principled choices: (a) escalate to the deterministic case by validating every variant individually until each passes the harm-purity test, recovering the ideal-case certificate of \Cref{cor:certificate}; or (b) treat the class-average $\kappa$ as evidence that further per-variant annotation is required and refuse to publish a stochastic $\bar\eta$ until per-variant $\epsilon_c$ are measured. The protocol should publish whichever per-variant deterministic or stochastic disagreement bound it is prepared to defend in the worst case, not a class-average summary statistic.
\end{remark}

\begin{remark}[Deterministic, expected, and high-probability readings]
\label{rem:three-readings}
    \Cref{thm:disagreement} admits three distinct probabilistic readings, and the auditor must pick one before publishing. (i) \emph{Deterministic.} The disagreement mass $\Dmass(x)$ is taken as a fixed quantity bounded by $\bar\eta$ for every admissible $x$; the certificate $\TrueHarm(x)\le (1/\hat\alpha)\Metric_{\Env(m)}(x)+\bar\eta$ then holds as a worst-case inequality with no residual probability. This is the natural reading when the auditor has validated every harmful-class variant individually. (ii) \emph{Expected.} When the audit's class labels are stochastic with per-variant Bernoulli error $\epsilon_c$, the bound is most naturally stated as $\mathbb{E}[\TrueHarm(x)]\le (1/\hat\alpha)\Metric_{\Env(m)}(x)+\epsilon_{\max}$, with the expectation taken over the labeling randomness; this is the bound delivered by the chain $\mathbb{E}[\Dmass(x)]\le \epsilon_{\max}$. (iii) \emph{High-probability.} For a regulator that wants a confidence statement at level $1-\delta$, replace $\epsilon_c$ with a one-sided upper-confidence bound $\hat\epsilon_c(\delta)$ on the Bernoulli error from a finite annotation sample (Wilson, Clopper--Pearson, or empirical-Bernstein, depending on the annotation budget); $\bar\eta(\delta)=\max_c \hat\epsilon_c(\delta)$ then yields $\Pr[\TrueHarm(x)\le (1/\hat\alpha)\Metric_{\Env(m)}(x)+\bar\eta(\delta)]\ge 1-\delta$ uniformly over admissible $x$. The deterministic and stochastic versions are recovered as the $\delta\to 0$ and expectation-over-labeling specializations, respectively. The auditor should declare which reading the published $\bar\eta$ corresponds to, because reviewers downstream of the audit cannot recover this from the numerical value alone.
\end{remark}

\begin{remark}[Utility-model robustness]
    All invariance and certification results are quantified over \emph{all} strategies $x\in\Strategies$. Therefore, they survive arbitrary platform-side utility choices and utility misspecification: whatever strategy a platform selects, if it passes the semantic-envelope metric, the same certificate applies. Utility matters in the experiments only because we need a rule for choosing one adversarial best response from the feasible set.
\end{remark}

\section{Synthetic Stress Tests}

We now instantiate the framework using reproducible synthetic stress tests. The goal is not to model any deployed platform faithfully, but to expose the attack surface and the effect of the repair under controlled assumptions. To keep the role of each subsection legible, the section is organized around three layers: an \emph{illustrative stress-test} layer (deterministic catalog in \Cref{sec:detcatalog}, random catalogs in \Cref{sec:random}) that exhibits the attack surface and the size of the gaming gap; a \emph{mechanized consistency-checking} layer (finite-state grid in \Cref{sec:finite-state}, SMT in \Cref{sec:smt}, bounded MDP in \Cref{sec:prism}) that re-establishes the same invariants under three different abstractions; and a \emph{protocol-illustration} layer (HateCheck-derived protocol example in \Cref{sec:hatecheck}) that shows what a published semantic class looks like when grounded in a real benchmark. None of these layers is a real-platform empirical claim; the synthetic utility model is illustrative throughout, while the metric guarantees of \Cref{sec:props-robustness} remain valid for every feasible strategy regardless of utility. Both the deterministic and random catalogs assume harm-pure classes by construction; when a real protocol cannot justify that assumption, \Cref{thm:disagreement} is the appropriate certificate.

\paragraph{Reproducibility at a glance.}
Every number in this section is produced by scripts under \path{code/python/} and \path{experiments/prism/}, with outputs archived under \path{data/processed/}:
\begin{sloppypar}
    \begin{itemize}[leftmargin=1.3em,itemsep=0.1em]
        \item \path{reproduce.py} regenerates the deterministic and random-catalog tables and the $53{,}130$-state grid ($<\!1$\,min on a laptop).
        \item \path{verify_smt.py} emits and checks the $15$ Z3 queries and replays them in cvc5 ($<\!1$\,s total).
        \item \path{experiments/prism/audit_game.prism} is solved by PRISM-games~4.8 in the explicit engine ($<\!10$\,ms).
        \item \Cref{app:protocol-sensitivity} points to \path{tables/protocol_sensitivity.tex}, generated by the same script.
    \end{itemize}
\end{sloppypar}
The full artifact package, including a Dockerfile, a one-shot driver, and expected-output checksums, is described in Appendix~\ref{sec:open-science}.

\subsection{Deterministic case study}
\label{sec:detcatalog}

Our deterministic catalog contains four benign items and three harmful semantic classes, each with an original and a manipulated variant. Variants within a harmful class share the same true harm but differ in measured score and platform utility. This is a harm-pure stress test by construction: the only thing that changes within a class is the measurement view. The platform solves:
\[
    \max_{x\in\Strategies} \Utility(x)
    \quad \text{s.t.} \quad \Metric_m(x)\leq \tau.
\]
Because both utility and the audit constraint are linear in $x$, the optimum is obtained by linear programming. \Cref{tab:detcatalog} reports the full catalog used here; it has $|\variants|=10$ variants in $|\classes|=7$ classes (four benign single-variant classes and three harmful classes with an \texttt{orig}/\texttt{manip} pair each), $\hat\alpha=\min_{c:h(c)=1}\max_{u\in c} m(u)=0.85$, and a within-class envelope gap of at least $0.73$ on every harmful class.

\begin{table}[t]
    \centering
    \caption{Full deterministic catalog. Harmful classes pair an \texttt{orig} variant with a low-scoring \texttt{manip} variant of the same true harm. Envelope scores are the classwise maxima of the fragile scores.}
    \label{tab:detcatalog}
    \footnotesize
    \begin{tabular}{@{}llcccc@{}}
        \toprule
        Class & Variant & $h(\cdot)$ & $m_{\Frag}$ & $m_{\Env}$ & $u(\cdot)$ \\
        \midrule
        B1 & v0 & 0 & 0.05 & 0.05 & 0.55 \\
        B2 & v0 & 0 & 0.08 & 0.08 & 0.58 \\
        B3 & v0 & 0 & 0.10 & 0.10 & 0.62 \\
        B4 & v0 & 0 & 0.14 & 0.14 & 0.66 \\
        H1 & orig  & 1 & 0.95 & 0.95 & 0.78 \\
        H1 & manip & 1 & 0.15 & 0.95 & 0.82 \\
        H2 & orig  & 1 & 0.85 & 0.85 & 0.72 \\
        H2 & manip & 1 & 0.10 & 0.85 & 0.76 \\
        H3 & orig  & 1 & 0.90 & 0.90 & 0.70 \\
        H3 & manip & 1 & 0.12 & 0.90 & 0.74 \\
        \bottomrule
    \end{tabular}
\end{table}

\begin{figure}[t]
    \centering
    \includegraphics[width=\columnwidth]{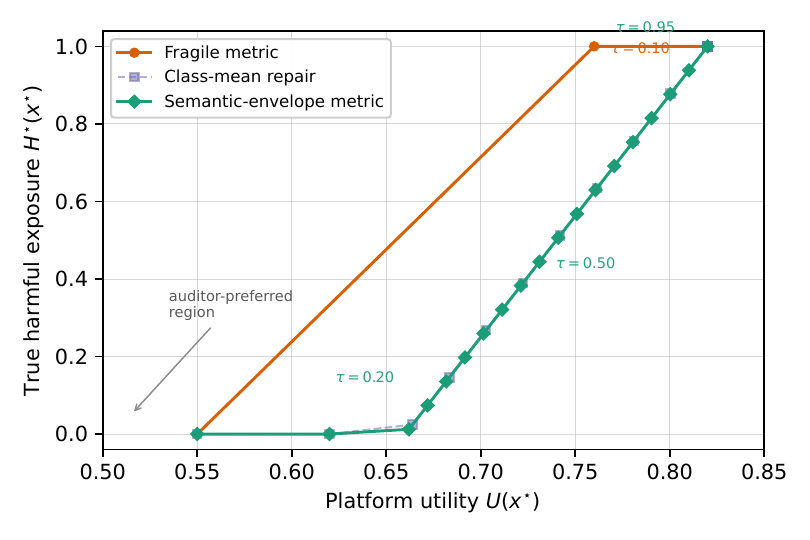}
    \caption{Utility--harm trajectories of the utility-maximizing platform strategy on the deterministic catalog, parameterized by the audit budget $\tau\in[0.05,0.95]$. Each curve traces $(\Utility(x^{\star}_\tau),\TrueHarm(x^{\star}_\tau))$ as $\tau$ grows. Rightward movement is what the platform wants (more utility); downward movement is what the auditor wants (less true harm), so trajectories that sit lower for the same horizontal position correspond to safer audits at any revealed utility level. The semantic-envelope trajectory (green diamonds) lies pointwise below the fragile trajectory (orange) in this auditor-preferred direction: at every platform utility the fragile metric attains, the envelope metric admits strictly less true harmful exposure. Equivalently, to reach the same true harm under the envelope metric the platform must give up additional utility, which is the measurement-induced safety tax. The class-mean repair (faded purple) restores manipulation invariance but sits slightly above the envelope because it is not one-sidedly conservative. Selected $\tau$ values are annotated.}
    \Description{Three trajectories of platform utility versus true harmful exposure. The fragile metric jumps from utility 0.55 at zero harm to utility 0.76 at full harm with one segment, then stays at full harm. The semantic-envelope and class-mean trajectories trace a near-coincident smooth diagonal from (0.55, 0) to (0.82, 1.0); both lie below the fragile trajectory at every shared utility, with class-mean slightly above envelope.}
    \label{fig:pareto}
\end{figure}

\Cref{fig:pareto} shows the main qualitative effect. At every platform utility level $\Utility(x^{\star})$ the fragile metric attains, the envelope metric admits strictly less true harmful exposure: the envelope trajectory dominates the fragile trajectory in the auditor-preferred direction. Concretely, the fragile metric reaches $\TrueHarm(x^{\star})=1.0$ at audit budget $\tau=0.10$ (utility $0.760$) and stays there for every larger $\tau$, whereas the envelope metric admits $\TrueHarm(x^{\star})=0.000$ at $\tau=0.10$, $0.012$ at $\tau=0.15$, and $0.074$ at $\tau=0.20$. The class-mean repair tracks the envelope closely (it restores manipulation invariance) but admits slightly more true harm at every $\tau$ because it is not one-sidedly conservative.

\begin{table}[t]
    \centering
    \caption{Deterministic case study under selected audit budgets.}
    \label{tab:case}
        \begin{tabular}{@{}l S[table-format=1.2] S[table-format=1.3] S[table-format=1.3]@{}}
        \toprule
        Metric and budget & {Utility} & {True harm} & {Measured score} \\
        \midrule
        Fragile @ $0.10$ & 0.760 & 1.000 & 0.100 \\
        Envelope @ $0.10$ & 0.620 & 0.000 & 0.100 \\
        Fragile @ $0.15$ & 0.820 & 1.000 & 0.150 \\
        Envelope @ $0.15$ & 0.662 & 0.012 & 0.150 \\
        Fragile @ $0.20$ & 0.820 & 1.000 & 0.150 \\
        Envelope @ $0.20$ & 0.672 & 0.074 & 0.200 \\
        \bottomrule
    \end{tabular}
\end{table}

\Cref{tab:case} illustrates the security meaning of the repair. At identical measured budgets, the fragile metric can accept a deployment whose actual harmful exposure is maximal, whereas the semantic-envelope metric forces the platform to substitute toward benign content.

To separate invariance from conservativeness, we also evaluated the non-conservative class-mean repair
\[
    \Mean(m)(v)=\frac{1}{|\{u:\classof{u}=\classof{v}\}|}\sum_{u:\classof{u}=\classof{v}} m(u).
\]
At budget $0.20$, the class-mean repair yields utility $0.683$ and true harmful exposure $0.146$, compared with $0.672$ and $0.074$ for the semantic envelope, and $0.820$ and $1.000$ for the fragile metric. \Cref{tab:repair-compare} summarizes the qualitative positions of the three repairs on the properties this paper cares about. The fragile metric fails invariance (\Cref{prop:fragile}); both classwise-constant repairs restore it; both also admit a class-stratified certificate, since either one yields a positive minimum-coverage constant (here $\alpha_{\text{mean}}=0.475$ versus $\alpha_{\text{env}}=0.85$). What distinguishes the envelope is pointwise conservativeness: it never down-scores a class containing a high-risk variant, which the auditor needs to defend the certificate's worst-case interpretation. The comparison makes the design choice explicit: once pointwise conservativeness is relaxed, utility can be recovered, but the certificate's coverage constant shrinks and the audit's worst-case reading weakens.

\begin{table}[t]
    \centering
    \caption{Qualitative comparison of the three audit-metric repairs against the security properties of Section~\ref{sec:props-robustness}. Utility and true-harm columns are evaluated on the deterministic catalog at budget $\tau=0.20$.}
    \label{tab:repair-compare}
    \small
    \renewcommand{\arraystretch}{1.0}
    \begin{tabularx}{\linewidth}{@{}lccXcc@{}}
        \toprule
        Repair & Invariant? & Conservative? & Certifies? & Utility & True harm \\
        \midrule
        Fragile    & $\times$ & $\times$ & only vacuously, at the smallest harmful-variant score & $0.820$ & $1.000$ \\
        Class-mean & $\checkmark$ & one-sided fails & yes, but with weaker coverage & $0.683$ & $0.146$ \\
        Envelope   & $\checkmark$ & $\checkmark$ & yes, strongest non-inflating coverage (Cor.~\ref{cor:certificate}) & $0.672$ & $0.074$ \\
        \bottomrule
    \end{tabularx}
\end{table}

\subsection{Finite-state verification}
\label{sec:finite-state}

To verify the certification claim on a finite model, we construct a smaller six-variant instance and exhaustively enumerate all mixed strategies on a probability grid of step size $0.05$ on the simplex (each $x_v$ is a multiple of $0.05$ summing to $1$). The number of such strategies is the number of weak compositions of $1/0.05=20$ into $|\variants|=6$ parts, $\binom{20+6-1}{6-1}=\binom{25}{5}=53{,}130$.

Let $\alpha$ be the minimum envelope score among harmful classes in the small model; here, $\alpha=0.85$. We compute the maximum value of
\[
    \Harm(x)-\Metric_m(x)/\alpha
\]
over all enumerated strategies. For the fragile metric, the maximum violation is $0.882$, which is achieved by concentrating all the mass on a low-score harmful variant. For the semantic-envelope metric, the maximum violation is $0$, matching \Cref{cor:certificate}. This finite-state check is not required for the theorem, but it demonstrates how a reviewer or regulator could mechanically verify the property on a bounded instance.

\subsection{Symbolic consistency checks}
\label{sec:smt}

Enumeration certifies the properties of a bounded discretization. To check the implementation of the full continuous simplex of mixed strategies for concrete catalog instances, we also encode the metric properties as satisfiability queries in the Z3 SMT solver \citep{DeMoura2008-z3}. We treat these SMT results as \emph{mechanized consistency checks} for the catalog encodings, not as replacements for the pen-and-paper proofs. The implementation uses Z3~4.16.0 in logic \texttt{QF\_LRA} with \texttt{random\_seed=0}, a per-query timeout of 10\,s, and a strict-inequality tolerance of $\epsilon=10^{-8}$ in the fragility witness. The artifact archives the emitted SMT-LIB2 instances under \path{data/processed/smt_queries/}, so the exact queries can be rerun or dropped into another solver.

Mathematical encoding uses only rational constants, linear equalities, and inequalities over the reals. For every catalog instance, we introduce variables $x_v\in\mathbb{R}_{\geq 0}$ such that $\sum_v x_v = 1$ and check three queries:

\begin{enumerate}[leftmargin=1.4em]
    \item \emph{Envelope invariance.} Seek two strategies, $x, y$ with identical class masses, $\sum_{v:\classof{v}=c}x_v = \sum_{v:\classof{v}=c}y_v$ for every $c\in\classes$, but distinct envelope metrics. UNSAT is consistent with \Cref{prop:invariant}.
    \item \emph{Fragility witness.} Seek two such strategies that differ in the fragile metric. SAT exhibits manipulation according to \Cref{prop:fragile}.
    \item \emph{Certification.} Seek any strategy on the simplex with $\alpha\cdot\Harm(x) > \Metric_{\Env(m)}(x)$, where $\alpha = \min_{c:h(c)=1}\alpha_c$. UNSAT is consistent with \Cref{cor:certificate} on the tested instance.
\end{enumerate}

\noindent We run the three queries on the paper's six-variant model and on a family of synthetic catalogs with $n\in\{15,25,45,85\}$ variants (five benign classes and $n_h\in\{5,10,20,40\}$ harmful classes, same generator as \Cref{sec:random}). All envelope-invariance and certification queries return UNSAT (no counterexample exists in the encoded simplex); all fragility queries return SAT (the solver returns an explicit manipulation witness $x,y$). Solver time is below $63$ms per query on all instances, and below $0.11$s total across the five instances and three properties.

A separate scalability sweep (Z3 only) extends the same three queries to random catalogs with $n_h\in\{100,250,500\}$ harmful classes ($n\in\{205,505,1005\}$ total variants). The verdicts remain unchanged from the small-instance regime: envelope invariance is UNSAT and certification is UNSAT in every case, and fragile manipulation is SAT with an explicit witness. The slowest single query at $n=1005$ completes in under $3$\,s on commodity hardware, and the full sweep of nine queries finishes in under $5$\,s; therefore, the encoding remains tractable two orders of magnitude past the toy regime; per-query timings are persisted in \path{data/processed/scalability.csv} and \path{scalability_summary.json}.

To guard against solver-specific encoding artifacts, we replay every query in cvc5 as an independent engine. The exported \texttt{QF\_LRA} instances are parsed by cvc5 under the same 10\,s per-query timeout, and the resulting verdicts are recorded in \path{data/processed/smt_crosssolve.csv}. Across all $15$ queries (five catalog instances $\times$ three properties), Z3 and cvc5 return the same verdict: every envelope-invariance query is UNSAT in both solvers, every fragility query is SAT in both, and every certification query is UNSAT in both. Cross-solver agreement is $15/15$. Therefore, we treat the SMT stage as a mechanized consistency check that is not tied to a single solver implementation, while keeping the pen-and-paper proofs of \Cref{prop:invariant,prop:fragile,cor:certificate} as the primary source of correctness.

\subsection{Temporal verification via bounded MDP}
\label{sec:prism}

Enumeration and SMT certify the static properties of the audit score. An orthogonal question is whether the same qualitative effect survives in a \emph{sequentialized} toy audit. To answer this, we encode the audit as a small Markov decision process in the PRISM-games~4.8 probabilistic model checker \citep{Kwiatkowska2020-prismgames} and verify reachability rewards in PCTL via the $\mathtt{R}\{\cdot\}\mathtt{max}{=?}\,[\,\mathtt{F}\,\cdot\,]$ operator, where $\mathrm{R}_{\max}$ quantifies over deterministic platform strategies and returns the maximum expected one-shot transition reward over reachable terminal states. This is a proof-of-concept temporal model and not a realistic regulatory audit protocol.

The model reuses the canonical two-variant instance (one benign class at a cost $0.10$; one harmful class with an original variant at a cost $0.90$ and a manipulated variant at a cost $0.10$; envelope cost $0.90$; audit budget $\tau=0.20$). We instantiate two separate metric settings, \texttt{fragile} and \texttt{envelope}; the auditor publishes the metric; therefore, it is not a platform choice. Within each metric setting, the platform makes two nondeterministic choices over successive rounds: (i) a mass allocation $m_H\in\{i/20 : 0\le i\le 20\}$ on a grid of step $0.05$, and (ii) a variant assignment \texttt{orig} or \texttt{manip}. A one-shot transition reward $\Harm(x)=m_H$ fires exactly when the audit passes under the chosen cell. PRISM's $\mathrm{R}_{\max}$ operator then quantifies over every deterministic platform strategy and returns the worst-case true-harm exposure in each $(\text{metric},\text{variant})$ combination:
\[
    \varphi_{m,v} \;=\; \mathtt{R}\{\text{``harm}_{m,v}\text{''}\}\mathtt{max}{=?}\,[\,\mathtt{F}\;\text{``done''}\,].
\]
Every path reaches \texttt{done} in three rounds, so each expected reward is finite.

\begin{table}[t]
    \centering
    \caption{Adversary-optimal true-harm exposure $\varphi_{m,v}$ computed by PRISM-games over the 3-round audit MDP (148 reachable states, 231 transitions, explicit engine, $<10$ms per query). Under the fragile metric the two variant cells diverge by $0.90$, exhibiting manipulation; under the envelope metric they coincide, matching \Cref{prop:invariant}.}
    \label{tab:prism}
    \small
    \begin{tabular}{lcc}
        \toprule
        Metric & $v=\text{orig}$ & $v=\text{manip}$ \\
        \midrule
        Fragile  & $0.10$ & $1.00$ \\
        Envelope & $0.10$ & $0.10$ \\
        \bottomrule
    \end{tabular}
\end{table}

\Cref{tab:prism} reports the four values returned by PRISM-games. The envelope row is constant across variants, so the invariance property $\varphi_{\Env,\texttt{orig}}=\varphi_{\Env,\texttt{manip}}$ holds as an \emph{operational} temporal claim, not only as an algebraic one. The fragile row splits: swapping the variant at fixed class mass raises the adversary's optimum from $0.10$ to $1.00$. The gap $\varphi_{\Frag,\texttt{manip}}-\varphi_{\Frag,\texttt{orig}}=0.90$ is of the same magnitude as the mean gap of $0.884$ from the random-catalog experiment in \Cref{sec:random}, and the envelope row certifies it away in a way an auditor can mechanically reproduce. The artifact ships the PRISM model at \path{experiments/prism/audit_game.prism} and the property file at \path{experiments/prism/audit_game.props}; the full solver trace is archived in \path{data/processed/prism_results.txt}.

\subsection{Random-catalog experiment}
\label{sec:random}

Next, we stress-test how often the gaming gap appears beyond a single hand-crafted example. We sample 500 random catalogs with five benign classes and five harmful classes. Each harmful class has an original high-score variant and a manipulated low-score variant with slightly improved utility. As in the deterministic case, the experiment assumes harm-purity by construction: the manipulation changes only the audit score, not latent harm. For each catalog, we compare the utility-maximizing strategy under the fragile metric to the utility-maximizing strategy under the semantic-envelope metric with the same audit budget, $\tau=0.20$.

\begin{figure}[t]
    \centering
    \includegraphics[width=\columnwidth]{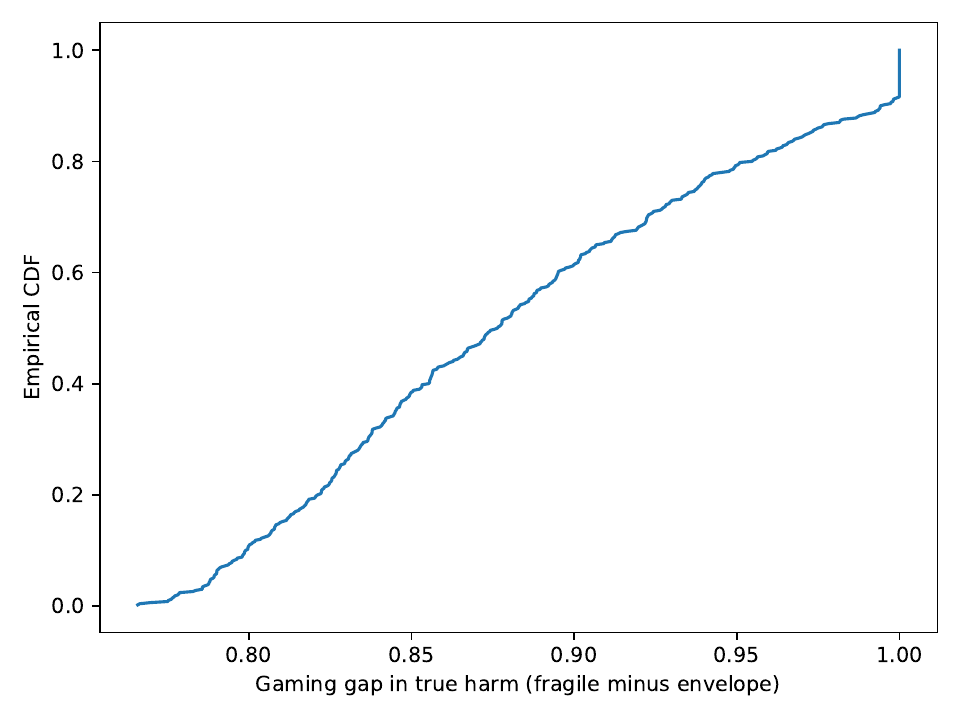}
    \caption{Empirical CDF of the gaming gap in true harmful exposure (fragile optimum minus semantic-envelope optimum) across 500 random catalogs at budget $0.20$. The gap exceeds $0.76$ in every instance.}
    \Description{An empirical cumulative distribution of the harm gap across 500 random catalogs. All observed gaps are large, beginning above 0.76 and extending to 1.00.}
    \label{fig:cdf}
\end{figure}

The results again show a large and systematic gap. The fragile-metric optimum has a mean true harmful exposure $1.00$, whereas the semantic-envelope optimum has a mean true harmful exposure $0.116$. The mean gap is $0.884$, and the fragile optimum has a strictly larger harmful exposure in all 500 instances. Platform utility is higher under the fragile metric (mean $0.887$ versus $0.684$), which is exactly why the metric is strategically attractive to the platform.

The class-mean repair again behaves intermediately: across the same 500 catalogs, it yields a mean true harmful exposure $0.266$ and a mean utility $0.712$. This confirms the role of conservativeness from \Cref{thm:minimal}: weaker classwise aggregators can interpolate toward higher utility, but they do so by admitting more harmful mass at the same budget.

\Cref{fig:cdf} shows the distribution of the harm gap. Every sampled instance exhibits a substantial gap: even the minimum observed gap is above $0.76$. In this synthetic family, the envelope coverage values are intentionally high by construction (global $\alpha$ mean $0.789$, minimum $0.750$, and no instance has a harmful class with $\alpha_c<0.20$), so the global corollary is informative on every instance. We report this prevalence deliberately: the coverage prerequisite of \Cref{cor:certificate} can fail, and the share of real-world catalogs with low-coverage harmful classes is an \emph{empirical} question an auditor must answer with data from the deployed domain. Real audits should not assume this harm-pure, high-coverage regime; they should report the full classwise profile from \Cref{thm:classwise} along with the share of harmful classes falling below $\eta$. \Cref{app:sensitivity} traces the certified ceiling on $\TrueHarm(x)$ at this audit budget across the realistic ranges $\hat\alpha\in[0.20,1.00]$ and $\bar\eta\in\{0,0.05,0.10,0.15,0.20\}$, so an auditor who has measured even crude bounds on coverage and disagreement can read off the resulting certificate without rerunning the LP. Nor should this experiment be read as a scalability claim for live audits: it shows that the attack surface is not unique to one hand-crafted catalog, but it does not replace a pilot on a deployed system or a post-hoc reanalysis of a published audit.

\subsection{HateCheck-derived protocol example with synthetic scores}
\label{sec:hatecheck}
\label{sec:hatecheck-mini}

The random-catalog experiment confirms that the fragile-envelope gap is not an artifact of a single hand-crafted example; it still draws classes from a synthetic generator. To move one step closer to a real protocol, we instantiate the framework on class structure taken directly from HateCheck \citep{Rottger2021-hc}: three harmful classes drawn from the \texttt{derog\_neg\_emote\_h} and \texttt{threat\_dir\_h} functionalities with distinct targets, and one benign control drawn from \texttt{ident\_neutral\_nh}. Each harmful class contains three HateCheck cases as within-class variants. Because we do not have a deployed classifier or an annotation pipeline for this example, we use a \emph{synthetic-but-rule-based} scoring function: within a harmful class, the first case receives score $0.92$ (the ``obvious'' variant a classifier flags) and subsequent cases receive $0.92\cdot(1-0.30i)$, giving a plausible descending profile for paraphrases or distractors that a real classifier might miss. Utility is the inverse, and therefore, the platform prefers the lower-scored variants. Benign variants receive a uniform low score and moderate utility. We frame this as an \emph{illustrative verification artifact}, not an evaluation of any real hate-speech classifier or audit pipeline: only the class structure and variant labels are real; scores and utilities are design choices locked before running the LPs.

\input{tables/hatecheck_mini.tex}

At budget $\tau=0.20$, the qualitative pattern from the synthetic experiments transfers as follows: under the fragile metric, the utility-maximizing platform routes substantial mass through low-score harmful variants in \texttt{threat\_dir\_h}, whereas under the semantic-envelope metric, the same budget forces most of that mass onto the benign class. The other two harmful classes contribute no mass in the LP optimum because, at the chosen budget and utility profile, their score-to-utility trade-off is dominated by the selected threat and benign classes (the LP allows fractional mass on any variant, so the absence is a dominance result, not a feasibility obstruction). Because scores are rule-based, this instantiation can be interpreted as a working HateCheck protocol instance illustrating that the attack surface and repair survive a protocol whose classes come from a public benchmark; it is not a measured vulnerability of any deployed classifier.

\section{Discussion}

\paragraph{Why the repair works.}
The semantic-envelope metric removes one specific attack surface: the platform can no longer gain audit slack by swapping among semantically equivalent variants. In the audit game, that forces the platform to reduce mass on harmful classes if it wants to meet a tighter budget. Therefore, repair is best understood as a \emph{measurement-hardening step}, not as a complete audit framework.

\paragraph{Where this layer sits in the audit-robustness stack.}
A regulatory audit pipeline has at least six attack surfaces under platform-side or user-side strategies: (i) within-class representation choice (this paper), (ii) protocol negotiation and threshold lobbying (sketched above), (iii) classifier evasion against the underlying detector, (iv) user-side feedback manipulation against risk-controlling recommenders \citep{De-Toni2026-xk}, (v) sampling error in the audit instrument, and (vi) cross-platform substitution. Each layer admits its own formal specification, its own threat model, and, once specified, its own verification or measurement question. We address layer~(i) because it is where a published audit metric becomes its own attack surface and where a small formal repair (\Cref{thm:minimal}) restores invariance without touching the protocol, the classifier, or the user side. This paper fixes the \emph{measurement specification} of the audit. Once layer~(i) is pinned, layers~(ii)--(vi) can be analyzed against a stable measurement ground truth instead of through an unspecified metric. \Cref{app:operational-harm-purity} gives the deployment-validation procedure that connects the formal specification to a real audit.

\paragraph{Protocol design is part of the audit.}
The paper no longer treats semantic classes as a hidden oracle. A published protocol now consists of a bounded transformation family, an attribute-preservation checklist, a validation study, and a threshold. Tightening the protocol refines the partition and weakly lowers the envelope by \Cref{prop:refinement}, so auditors can report a monotone sensitivity band rather than a single unexplained partition. What this does \emph{not} solve is protocol negotiation or strategic lobbying over the protocol itself; that higher-level meta-game remains outside scope.

\paragraph{A toy protocol-negotiation meta-game.}
To sketch where the envelope remains useful and where it breaks under protocol-level strategy, consider a two-period variant of the audit game. In period~$0$, the platform can propose a transformation family $T_0'\supseteq T_0$ and lobby for a threshold $\rho'\le\rho$; the regulator admits a subset of proposals after validation. In period~$1$, within-class manipulation proceeds as in the main model against the accepted protocol. Two observations follow directly from \Cref{prop:refinement}. First, any admitted edge that genuinely preserves audited semantics coarsens the partition and weakly \emph{raises} the envelope score: the platform gains nothing from lobbying for such edges because the envelope already covers them. Second, admitting an edge that merges items with different latent harm labels breaks harm-purity; the disagreement mass $\Dmass(x)$ of \Cref{thm:disagreement} absorbs the error, but only up to the slack $\bar\eta$. The regulator's best response is therefore to tie threshold $\rho$ and edge admission to a published validation target and to publish a \emph{refined} envelope under the stricter protocol alongside the permissive one. A fully game-theoretic analysis of this meta-game, including platform incentives to inflate $T_0'$ and regulator commitment to $\rho$, is outside the scope of this paper, but the envelope provides a checkable quantity at every period of such a game.

\paragraph{Annotation uncertainty enters as slack, not as a footnote.}
The disagreement theorem separates audited class labels from latent harm and charges the mismatch to $\Dmass(x)$. This is the appropriate place for annotation uncertainty, class impurity, and false equivalence edges to appear. If a validation study can upper-bound disagreement mass by $\bar\eta$, then the final certificate degrades additively by exactly that amount. If it cannot, the paper should not claim to have an exact certificate.

\paragraph{Novelty boundary.}
The closest conceptual ancestors are strategic classification and reward hacking \citep{Hardt2015-ci,Skalse2022-oz}. We do not claim a fundamentally new equilibrium concept beyond those literatures. The novelty here is the object of analysis---published platform-audit metrics---together with a concrete protocol model, a simple max-over-class repair with a minimality theorem, and a reviewer-checkable verification artifact for this threat model.

\paragraph{What the synthetic results do and do not show.}
The synthetic stress tests show that within-class manipulation can create large gaps even in small catalogs, and that the envelope removes this gap in the tested family. The certificate's quantitative bound is utility-agnostic by construction: \Cref{cor:disagreement} quantifies over every admissible strategy; therefore, the LP-selected best response in \Cref{tab:repair-compare} is illustrative of \emph{which} strategy a utility-maximizing platform would pick, not of \emph{whether} the bound holds. The synthetic results, by themselves, do not establish platform-side acceptance of the utility cost on real engagement data, the empirical distribution of class-coverage $\hat\alpha$ and disagreement mass $\bar\eta$ on a deployed audit, the scalability of the SMT and PRISM stages beyond the toy regime in \Cref{sec:smt,sec:prism}, sampling-error control, or regulator-platform bargaining over the protocol itself. Each of these is a separate, self-contained empirical question; the operational harm-purity workflow of \Cref{app:operational-harm-purity} prescribes the measurement steps an auditor would take to answer the coverage and disagreement parts, and the protocol-negotiation sketch above isolates the bargaining layer. The natural next steps are therefore concrete: a post-hoc reanalysis of a published audit to populate $\hat\alpha$ and $\bar\eta$ from real protocol data, and a multi-round game-theoretic extension that nests the protocol-negotiation meta-game inside the within-class certificate.

\paragraph{Relationship to formal verification.}
Our proofs characterize universal properties over all admissible strategies, and the enumeration, SMT, and bounded-MDP experiments (\Cref{sec:finite-state,sec:smt,sec:prism}) show how the same properties can be checked mechanically at three different levels of abstraction. The PRISM-games stage is technically a single-player MDP rather than a two-player stochastic game; we use the PRISM-games engine because it natively supports the reachability-reward operators that are required. In this version, we present the SMT stage as an implementation-level consistency check rather than as the sole proof source, and we document the exact Z3 configuration together with the emitted SMT-LIB queries and the cvc5 cross-solver replay for artifact-time reproduction. A natural next step is the development of an interactive theorem prover for end-to-end machine-checkable proofs.

\section{Related Work}

\textbf{Strategic manipulation and reward hacking.}
Our work is closest to strategic classification and reward-hacking theory. Strategic classification studies agents who manipulate features to receive favorable outcomes \citep{Hardt2015-ci}. Reward-hacking work asks when optimizing a proxy objective harms the true objective and when a proxy can be called robust or unhackable \citep{Skalse2022-oz}. We borrow this adversarial viewpoint. The distinction here is the object under attack: a published platform-audit metric, not a decision boundary or a learned reward model. Strategic classification typically asks how an agent crosses a classifier boundary by moving in feature space; our platform instead chooses which measurement view of fixed harmful content is exposed under a published audit protocol. The semantic-class protocol and envelope repair make this distinction operational.

\textbf{Performative prediction and endogenous distributions.}
\citet{Perdomo2020-oa} study predictors whose deployment changes the future data distribution and define performative stability as an equilibrium concept for retraining. Our setting is also endogenous, but the mechanism is different: the platform strategically reallocates exposure across measurement views of fixed harmfulness classes rather than subjects responding to a deployed predictor. This lets us prove invariance and certification results about \emph{metrics} rather than convergence or stability of a learning process.

\textbf{Security measurement and evaluation fragility.}
Recent CCS work shows that evaluation protocols can fail under stronger adversaries. \citet{Aerni2024-sm} demonstrate that empirical privacy-defense evaluations are misleading when they ignore adaptive attacks or vulnerable samples. \citet{Wang2025-gt} show that membership-inference attacks used as privacy tools differ substantially in reliability and coverage. Our setting is analogous: a safety metric is trustworthy only if it survives strategic pressure from the system being audited.

\textbf{Auditing and safety in recommender systems.}
\citet{Messmer2023-ar} argue that DSA-style audits require concrete, risk-scenario-based procedures rather than high-level transparency promises. \citet{Sharma2024-yp} provide a causal recipe for defining auditing metrics in recommenders. Harm-aware recommendation work shows that optimizing engagement under user dynamics can amplify harmful pathways and that naive mitigation may fail \citep{Chee2024-nl,Ribeiro2023-sl}. We address a different layer: whether the \emph{audit metric itself} remains sound under platform-side manipulation, given the protocol that defines it.

\textbf{Audit access and audit measurement.}
A complementary line of work focuses on the \emph{access} side of platform audits. \citet{Burnat2026-xq} catalogue ``audit blindspots'' arising from platform API restrictions under the EU Digital Services Act, where the data needed to audit content moderation and algorithmic amplification is itself foreclosed. Our setting assumes the auditor can obtain the requested signal and asks a different question: even with full access, the within-class representation choice on which a published metric depends can be strategically manipulated, so measurement design is its own audit-hardening surface.

\textbf{Functional testing and provable risk control.}
HateCheck introduces a public functional-test suite that exposes brittle keyword dependence and poor behavior on contrastive cases in hate-speech detection models \citep{Rottger2021-hc}. We use it differently: as a concrete source of protocol-level semantic classes. Recent work on conformal risk control has proposed recommender mechanisms with formal safety guarantees \citep{De-Toni2025-bc}. Follow-up work has shown that such mechanisms can still be vulnerable to coordinated manipulation through feedback channels \citep{De-Toni2026-xk}. Our model is complementary: rather than manipulating training or feedback, the platform manipulates the audit measurement channel.

\textbf{Distinction from conformal prediction.}
A natural question is whether conformal prediction (CP) provides the certificate that this study constructs. CP provides distribution-free coverage guarantees by calibrating prediction sets on an exchangeable holdout, so that the population-level miscoverage rate matches a published $\alpha$ with high probability. The settings differ in two structural ways. First, CP's guarantees rely on exchangeability between the calibration set and the deployed instances. Under our threat model, the platform observes the published protocol and re-routes exposure mass toward semantically equivalent low-score variants \emph{after} the protocol is fixed; the deployed marginal therefore drifts away from the calibration marginal in a direction the platform chooses. CP's coverage guarantee degrades silently under this kind of strategic distribution shift unless paired with adversary-aware recalibration. Second, CP is score-function agnostic by design: it wraps any scoring rule with a calibrated rejection threshold and certifies the rule's miscoverage rate, not the rule's robustness to within-class representation choice. Our certificate is metric-specific: it asserts that a particular published score function (the semantic envelope) admits an $\varepsilon$-strict class-coverage certificate and that direct variant scoring does not. The two methods address different layers and can be combined: CP for distribution-free uncertainty quantification on individual scores and the envelope for the audit-level metric. The joint construction is future work.

\section{Conclusion}

We treat a published platform-safety metric as a security object. The two requirements are that no platform strategy can improve the score through semantically irrelevant manipulations and that the score upper-bounds true harmful exposure under a published per-class coverage profile. Direct variant scoring fails the first requirement. The semantic-envelope lift restores manipulation invariance, is the least conservative classwise repair in its comparison class, and yields a simple global certificate when harmful classes are sufficiently covered.

The paper makes three further points explicit. First, semantic classes are part of the published audit protocol, not an oracle: the protocol can be varied and its sensitivity reported. Second, annotation and protocol error do not disappear by assumption; they enter the final guarantee as disagreement slack. Third, the theorem is utility-agnostic, whereas the LP experiments simply instantiate one utility-maximizing best response under a stated utility model. The broader lesson is methodological. Platform audits should not stop at reporting descriptive metrics; they should analyze the metric under the behavior of an optimizing adversary and clearly state what uncertainty remains outside the certificate.

The framework is a measurement-hardening component and not a complete regulatory-audit solution. The latent-harm version of the certificate (\Cref{cor:disagreement}) is only as tight as the published $\bar\eta$ is honest: the bound holds for every platform strategy, but it certifies useful harm ceilings only when $\bar\eta$ is calibrated from per-variant disagreement evidence rather than a class-average summary. The next steps are empirical grounding on a real audit (to populate $\hat\alpha$ and per-variant $\bar\eta$ from deployed protocol data) and richer formal models that nest the protocol-negotiation meta-game inside the within-class certificate.

\singlespacing
\printbibliography

\clearpage

\appendix
\section{Ethical Considerations}

This study uses synthetic catalogs, a worked example derived from a public benchmark, and no live-platform intervention. It does not involve human subjects, personal data, or operational interactions with a deployed service. The public benchmark example reproduces only a few short hate-speech test cases from HateCheck to illustrate semantic-class construction; these excerpts are already part of a research dataset and are included solely to explain the method. The main dual-use concern is conceptual: a formal analysis of audit gaming could help auditors design stronger metrics, but it could also help platforms identify weak metrics. We mitigate this risk by releasing only toy code and synthetic data without any platform-specific exploit details or operational advice for evading a real audit. The intended benefit is to improve the design and review of safety measurements used in high-stakes audits, especially in areas where minors may be affected.

\section{Open Science}
\label{sec:open-science}

The artifacts needed to evaluate the paper's core contributions are:
\begin{enumerate}[leftmargin=1.2em]
    \item source code for the deterministic stress test, finite-state verification, SMT query generation, and random-catalog experiments;
    \item generated CSV/JSON outputs used to populate the numerical claims in the paper;
    \item figure source files and rendered figures;
    \item the PRISM model, property file, and raw solver output for the sequential toy audit;
    \item the SMT query generator together with the emitted query instances and solver logs; and
    \item a short note documenting the HateCheck-derived class example and any sensitivity analysis.
\end{enumerate}

The artifact is publicly hosted at:

\noindent\textbf{Artifact repository:} \href{https://github.com/flonat/ccs-2026-formal-verification-artifact}{https://github.com/flonat/ccs-2026-formal-verification-artifact} (a Zenodo DOI snapshot will be issued alongside this preprint).

\noindent The artifact bundle contains:
\begin{itemize}[leftmargin=1.4em]
    \item \path{code/python/} --- \path{catalogs.py} (catalog builders), \path{reproduce.py} (LPs and grid enumeration), \path{verify_smt.py} (Z3 + cvc5 cross-solve), \path{scalability.py} (SMT at $n$ up to $1005$), \path{sensitivity.py} (closed-form $\alpha/\eta$ Pareto), \path{pareto.py} (utility--harm trajectory)
    \item \path{experiments/prism/} --- PRISM-games model, property file, and runner
    \item \path{data/processed/} --- reference CSV/JSON outputs for every numerical claim, including the $15$ \texttt{.smt2} query instances under \path{smt_queries/}, the PRISM trace in \path{prism_results.txt}, the random-catalog summary, and the Pareto-trajectory data
    \item \path{docs/hatecheck_worked_example.md} --- protocol note for the \Cref{sec:hatecheck-mini} instantiation
    \item \path{Dockerfile}, \path{run_all.sh}, \path{pyproject.toml}, \path{uv.lock} --- one-shot reproducibility container (Python~3.13, Z3~4.16.0, cvc5~1.3.3, OpenJDK, PRISM-games~4.8) with a deterministic build
    \item \path{README.md}, \path{EXPECTED_OUTPUTS.md}, \path{BADGES.md} --- evaluator-facing documentation: quick-start (Docker and host), per-headline-number reference values with tolerances, and the mapping to the CCS Artifact Evaluation criteria
\end{itemize}
No credentials are needed; the worked example does not depend on any non-public dataset or service.

\section{Protocol Sensitivity and Harm-Purity Checks}
\label{app:protocol-sensitivity}

We instantiate \Cref{prop:refinement} on the six-variant model with two validation thresholds, $\rho=0.70$ and $\rho=0.90$. The lower threshold keeps both harmful pairs merged, and the higher threshold retains the $H1$ pair but splits $H2$ into singleton harmful classes. The generated appendix table reports the induced harmful partition together with the maximum finite-state certificate violation on the $0.05$ simplex grid.

\input{tables/protocol_sensitivity.tex}

This check confirms the monotonicity claim pointwise: every variant's envelope score under $\rho=0.90$ is at most its score under $\rho=0.70$. In the stricter protocol, the within-class gaming surface for $H2$ disappears altogether; therefore, the fragile and envelope metrics coincide on that class. The trade-off is a much smaller global coverage term $\alpha=0.10$, which makes the coarse global corollary looser, even though the observed grid violation remains $0$.

\subsection{Operational harm-purity verification}
\label{app:operational-harm-purity}

Before applying the ideal-case certificate to a real audit, the auditor should run a per-variant harm-purity check (a class-average agreement rate is not adversarially valid under $\Strategies=\Delta(\variants)$, because the platform can concentrate exposure on the few disputed variants):
\begin{enumerate}[leftmargin=1.3em]
    \item sample every variant in every harmful class that can receive platform exposure (and every audit-negative class whose mass-bearing variants might mask false negatives, since a wrong audit-negative label also contributes to $\Dmass(x)$);
    \item obtain at least $N\geq 3$ independent annotations per sampled variant using the same harm definition that underwrites the audit;
    \item for each variant, declare it harm-aligned when the annotator majority matches the published class label, and compute a per-variant error upper bound $\epsilon_v$ (e.g.\ a one-sided binomial confidence bound from the $N$ annotations);
    \item if \emph{every} sampled variant in the class passes the predeclared harm-purity test, route the class through the ideal certificate and report the resulting classwise envelope coverage values $\hat\alpha_c$;
    \item otherwise retain the class in the protocol but route it through \Cref{thm:disagreement} using the per-variant or exposure-weighted disagreement bound $\bar\eta_c=\max_{v\in c}\epsilon_v$ (or a tighter exposure-capped variant when caps are part of the protocol).
\end{enumerate}
This procedure does not prove latent harm-purity in a metaphysical sense; it operationalizes the assumption so the reader can see when the ideal-case certificate is justified and when the disagreement-slack certificate should replace it.

\paragraph{Status of this procedure.}
The five-step workflow above is a \emph{proposed} auditor-facing procedure; we have not executed it as an empirical annotation study in this paper. The HateCheck example in \Cref{tab:hatecheckclass} is a worked protocol-construction illustration, not an annotation experiment. Validating the procedure end-to-end on real audit data (recruiting annotators, collecting per-variant harm labels, computing inter-annotator agreement, and calibrating the published slack $\bar\eta$) is future work.

\paragraph{Operational cost and feasibility.}
The dominant operational costs are annotator recruitment, training, and the per-variant labeling itself. For a harmful class with $V_c$ candidate variants and $N\geq 3$ annotations per variant, a single class consumes $N\cdot V_c$ harm-label judgments; an audit covering the worst-case classes from a published protocol with $|\classes_h|$ harmful classes consumes on the order of $3 V_c |\classes_h|$ judgments at the lowest defensible $N$. At a sustained throughput of one judgment per minute per annotator and three independent annotators, a class with $V_c=20$ variants requires roughly one annotator-hour to clear the basic harm-purity check; class-purity targets above $0.80$ majority agreement or Fleiss'-$\kappa$ thresholds above $0.6$, raise this floor whenever inter-annotator disagreement requires adjudication. Where annotator time is scarce, the auditor can prioritize the harmful classes that the platform's deployed strategy actually concentrates mass on (the variants that drive $\Dmass(x)$ in \Cref{thm:disagreement}) rather than the full published protocol. A principled empirical-validation plan for the workflow---a small pilot on real audit data, followed by a staged rollout that tracks per-variant error bounds across one full audit cycle---is the natural next step beyond this paper, and the design of that plan is itself a contribution we leave open.

\paragraph{Worked calibration arithmetic.}
To make the machinery of \Cref{thm:disagreement} concrete, we record a worked calibration under plausible values that an auditor would observe after running the five steps. Suppose the audit publishes a harmful class with coverage $\hat\alpha_c=0.90$, and that the per-variant annotation study on that class yields a per-variant Bernoulli error bound $\max_v \epsilon_v \le 0.18$ across every variant in the class. With an admissible within-class support bound $\bar s=1$, the published slack is then $\bar\eta\le 0.18$. \Cref{thm:disagreement} instantiates as
\[
    \TrueHarm(x)\le \tfrac{1}{0.90}\,\Metric_{\Env(m)}(x) + 0.18
\]
for every platform strategy $x$. At an audit budget of $\tau=0.20$, the certified ceiling on true harmful exposure is $(0.20/0.90)+0.18\approx 0.402$. This is looser than the ideal-case bound $0.20/0.90\approx 0.222$; the gap of $0.18$ is exactly the price of moving from assumed harm-purity to a per-variant error bound $\epsilon_c\le 0.18$. A class-average agreement statistic such as a HateCheck-style $\kappa=0.82$ does \emph{not} justify $\bar\eta=0.18$ on its own (see the remark after \Cref{cor:disagreement}): the platform can concentrate exposure on the small fraction of disputed variants and inflate $\Dmass(x)$ well past $1-\kappa$. To use the same numerical figure, the auditor must convert the aggregate $\kappa$ into a per-variant upper-confidence bound (e.g.\ a Wilson or Clopper--Pearson interval on each variant's miscoverage rate) and publish $\max_v \epsilon_v$ rather than the class average. Auditors should report $\hat\alpha_c$, the per-variant error bounds, and the resulting $\bar\eta$ explicitly so that the certificate they publish is reproducible from the same numbers.

\subsection{Sensitivity to coverage and disagreement}
\label{app:sensitivity}

Because $\hat\alpha$ and $\bar\eta$ are estimated and not chosen, an auditor needs to understand how the certificate degrades under realistic measurement error. \Cref{cor:disagreement} gives the worst-case ceiling at audit budget $\tau$ in closed form: $\TrueHarm(x)\le \tau/\hat\alpha + \bar\eta$. \Cref{fig:sensitivity-alpha-eta} traces this ceiling at the budget $\tau=0.20$ used in our deterministic and random-catalog experiments. The certificate is meaningful (i.e.\ strictly below the trivial bound $H^\star\le 1$) whenever $\hat\alpha > \tau/(1-\bar\eta)$; for $\bar\eta=0.20$ this requires $\hat\alpha > 0.25$, and for $\bar\eta=0.10$, $\hat\alpha > 0.222$ suffices. The bound is also slowly varying in $\bar\eta$: at $\hat\alpha=0.85$, doubling the disagreement slack from $0.10$ to $0.20$ moves the ceiling from $0.335$ to $0.435$, a change of $0.10$ that exactly tracks the additive $\bar\eta$ term. The implication for deployment is that coverage estimation has greater leverage---the multiplicative $1/\hat\alpha$ term punishes low-coverage classes harshly---and that any audit reporting $\bar\eta\le 0.20$ alongside $\hat\alpha\ge 0.75$ delivers a certified ceiling below $0.47$ on true harmful exposure. The reproduction script for this figure is \path{code/python/sensitivity.py}.

\begin{figure}[t]
    \centering
    \includegraphics[width=0.6\textwidth]{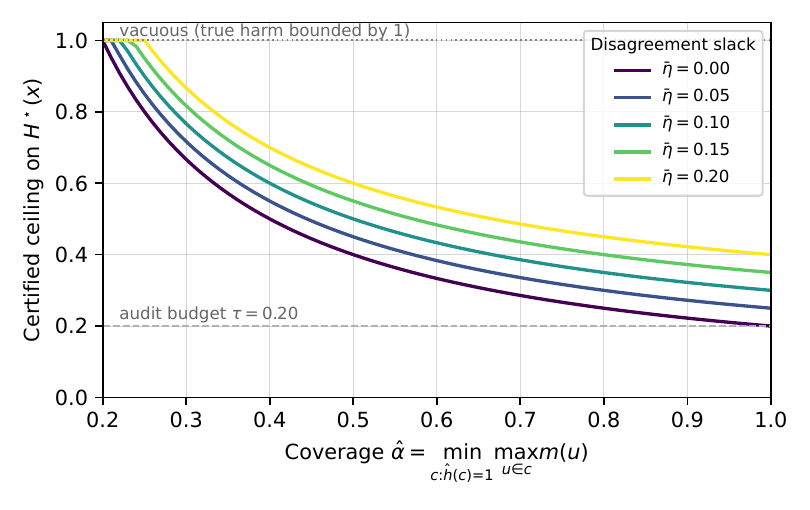}
    \caption{Certified ceiling on true harmful exposure $\TrueHarm(x)\le \tau/\hat\alpha + \bar\eta$ as a function of coverage $\hat\alpha$, at audit budget $\tau=0.20$, for five values of disagreement slack $\bar\eta$. The ceiling is clipped at $1.0$, the trivial bound. The certificate is meaningful in the region below the dotted line.}
    \Description{Five monotonically decreasing curves on a plot of certified true-harm ceiling against coverage. The lowest curve is the disagreement-slack-zero case; higher curves correspond to disagreement slacks of 0.05, 0.10, 0.15, and 0.20. All curves drop from the trivial vacuous ceiling of 1.0 to below 0.5 as coverage rises from 0.20 to 1.00.}
    \label{fig:sensitivity-alpha-eta}
\end{figure}

\section{Generative AI Usage}

During the preparation of this work, the authors used Anthropic's Claude (Sonnet 4.5 and Opus 4.7, via the Claude Code interface) to assist with two substantive tasks:

\paragraph{Writing.} Claude was used to suggest structural reorderings and to check English usage. All formal content---definitions, propositions, proofs, and the threat model---was written and checked by the authors.

\paragraph{Code.} The Python scripts in \path{code/python/} (catalog construction, linear-program reproduction, SMT query generation, cross-solver replay, scalability sweep, and figure generation) were drafted with Claude assistance and subsequently verified by the authors. The PRISM-games model and property file were manually created. All numerical claims in the paper were re-derived from these scripts prior to submission.

\paragraph{Validation.} The authors verified AI-assisted content by (i) re-running every computational stage end-to-end and checking outputs against \path{EXPECTED_OUTPUTS.md} in the artifact; (ii) cross-replaying every SMT query in cvc5 as an independent solver, with $15/15$ verdict agreement; (iii) verifying every cited reference against CrossRef and OpenAlex DOI lookups; and (iv) line-by-line review of every prose passage. No citations, numerical claims, theorem statements, or experimental results in the paper were generated by AI without independent human verification against the underlying source or computation; the authors take full responsibility for the content of the published article.

\end{document}

%% file: tables/hatecheck_mini.tex
\begin{table*}[t]
    \centering
    \caption{HateCheck-derived semi-real instantiation at audit budget $\tau=0.20$. Classes and variant labels follow the HateCheck functional-test taxonomy; classifier scores and utilities are synthetic-but-rule-based (see \Cref{sec:hatecheck-mini}). Columns report the platform's class-level exposure mass and the class-level contribution to true harmful exposure under each metric.}
    \label{tab:hatecheck-mini}
    \footnotesize
    \begin{tabular}{@{}p{0.33\columnwidth} c S[table-format=1.3] S[table-format=1.3] S[table-format=1.3] S[table-format=1.3]@{}}
    \toprule
    {Class} & {Harm} & {Frag.~mass} & {Env.~mass} & {Frag.~harm} & {Env.~harm} \\
    \midrule
    derog\_neg\_emote\_h\_racial & 1 & 0.000 & 0.000 & 0.000 & 0.000 \\
    derog\_neg\_emote\_h\_women & 1 & 0.000 & 0.000 & 0.000 & 0.000 \\
    ident\_neutral\_nh & 0 & 0.528 & 0.828 & 0.000 & 0.000 \\
    threat\_dir\_h & 1 & 0.472 & 0.172 & 0.472 & 0.172 \\
    \midrule
    \textit{Total true harm} & & \multicolumn{2}{c}{Measured $M(x)=\tau$} & 0.472 & 0.172 \\
    \bottomrule
    \end{tabular}
\end{table*}

%% file: tables/protocol_sensitivity.tex
\begin{table*}[t]
    \centering
    \caption{Protocol-sensitivity check under two validation thresholds on the six-variant model. Higher $\rho$ refines the published partition and weakly lowers the envelope pointwise; in both cases the envelope certificate remains exact on the grid.}
    \label{tab:protocol-sensitivity}
    \footnotesize
    \begin{tabular}{@{}S[table-format=1.2] S[table-format=1.2] p{0.33\columnwidth} S[table-format=1.3] S[table-format=1.3]@{}}
    \toprule
    {$\rho$} & {$\alpha$} & {Induced harmful partition} & {Fragile max. violation} & {Envelope max. violation} \\
    \midrule
    0.70 & 0.85 & H1: {o,m}; H2: {o,m} & 0.882 & 0.000 \\
    0.90 & 0.10 & H1: {o,m}; H2: {m} | {o} & 0.000 & 0.000 \\
    \bottomrule
    \end{tabular}
\end{table*}